\pgfplotsset{compat=newest}
\newtheorem{theorem}{Theorem}
\newtheorem{corollary}{Corollary}
\newtheorem{lemma}{Lemma}
\def\Sdc{\mathcal{S}^\text{dc}}
\def\Pout{\mathcal{P}_\text{out}}
\def\Poutdc{\mathcal{P}_\text{out}^{\,\text{dc}}}
\def\Poutrc{\mathcal{P}_\text{out}^{\,\text{rc}}}
\def\Poutcc{\mathcal{P}_\text{out}^{\,\text{cc}}}
\def\a{\alpha}
\def\PU{P_\text{U}}
\def\A{\mathcal{A}}
\def\C{\mathcal{C}}
\def\KUX{K_\text{UX}}
\def\KRD{K_\text{RD}}
\def\GUX{\Gamma_\text{UX}}
\def\GUD{\Gamma_\text{UD}}
\def\PU{P_\text{U}}
\def\PR{P_\text{R}}
\def\lUX{\ell_\text{UX}}
\def\aUX{\alpha^{}_\text{\tiny{UX}}}
\def\aRD{\alpha^{}_\text{\tiny{RD}}}
\def\aUD{\alpha^{}_\text{\tiny{UD}}}
\def\OUX{\Omega_\text{UX}}
\def\OUD{\Omega_\text{UD}}
\def\tX{\theta_\text{X}}
\def\tD{\theta_\text{D}}
\def\tR{\theta_\text{R}}
\def\tC{\theta_\mathcal{C}}
\def\lC{\ell_\mathcal{C}}
\def\GRD{\Gamma_{{\text{R}}{\text{D}}}}
\def\lRD{\ell_{\text{RD}}}
\def\lUR{\ell_{\text{UR}}}
\def\ORD{\Omega_{{\text{R}}{\text{D}}}}
\def\rD{r^{}_\text{D}}
\def\lUD{\ell_{\text{UD}}}
\def\rdc{r_\mathcal{C}^\text{dc}}
\def\rrc{r_\mathcal{C}^\text{rc}}
\def\rCdc{\tilde{r}_\mathcal{C}^\text{dc}}
\def\rCrc{\tilde{r}_\mathcal{C}^\text{rc}}
\def\rCcc{\tilde{r}_\mathcal{C}^\text{cc}}
\def\hOPTdc{\tilde{h}_\mathcal{C}^\text{dc}}
\def\xc{x_\mathcal{C}}
\def\yc{y_\mathcal{C}}
\def\tCopt{\tilde{\theta}_\mathcal{C}}
\def\tCoptdc{\tilde{\theta}_\mathcal{C}^{\,\text{dc}}}
\def\PLoS{\mathcal{P}_\text{LoS}}
\def\hdc{\tilde{h}_\text{D}^{\text{dc}}}
\def\hrc{\tilde{h}_\text{D}^{\text{rc}}}
\def\hcc{\tilde{h}_\text{D}^{\text{cc}}}
\def\tetdc{\tilde{\theta}_\text{D}^{\,\text{dc}}}
\def\tetrc{\tilde{\theta}_\text{D}^{\,\text{rc}}}
\def\tetcc{\tilde{\theta}_\text{D}^{\,\text{cc}}}
\def\rR{r^{}_\text{R}}
\def\fiR{\varphi^{}_\text{R}}
\def\rOrc{\tilde{\rho}_\mathcal{C}^{\,\text{rc}}}
\def\rOcc{\tilde{\rho}_\mathcal{C}^{\,\text{cc}}}
\def\PoutY{\mathcal{P}_\text{out}^\text{Y}}
\def\hY{\tilde{h}_\text{D}^\text{Y}}
\def\rY{r_\mathcal{C}^\text{Y}}
\def\rCY{\tilde{r}_\mathcal{C}^\text{Y}}
\def\hOPTY{\tilde{h}_\mathcal{C}^\text{Y}}
\def\SY{\mathcal{S}^\text{Y}}
\begin{document}
\title{Ultra Reliable UAV Communication Using Altitude and Cooperation Diversity}
\author{Mohammad Mahdi Azari, Fernando Rosas, Kwang-Cheng Chen, and Sofie Pollin}
\maketitle

\begin{abstract}
The use of unmanned aerial vehicles (UAVs) that serve as aerial base stations is expected to become predominant in the next decade. However, in order for this technology to unfold its full potential it is necessary to develop a fundamental understanding of the distinctive features of air-to-ground (A2G) links. As a contribution in this direction, this paper proposes a generic framework for the analysis and optimization of the A2G systems. In contrast to the existing literature, this framework incorporates both height-dependent path loss exponent and small-scale fading, and unifies a widely used ground-to-ground channel model with that of A2G for analysis of large-scale wireless networks. We derive analytical expressions for the optimal UAV height that minimizes the outage probability of a given A2G link. Moreover, our framework allows us to derive a height-dependent closed-form expression and a tight lower bound for the outage probability of an \textit{A2G cooperative communication} network. Our results suggest that the optimal location of the UAVs with respect to the ground nodes does not change by the inclusion of ground relays. This enables interesting insights in the deployment of future A2G networks, as the system reliability could be adjusted dynamically by adding relaying nodes without requiring changes in the position of the corresponding UAVs.
\end{abstract}

\begin{IEEEkeywords}
Air-to-ground (A2G) communication, unmanned aerial vehicle (UAV), aerial base station, outage probability, Rician fading, inverse Marcum Q--function, cooperative communication, Poisson point process (PPP)
\end{IEEEkeywords}


\section{Introduction}

Aerial telecommunication platforms have been increasingly used as an innovative method to enable robust and reliable communication networks. Facebook \cite{Facebook} and Google \cite{katikala2014google} have been planning to establish massive networks of unmanned aerial vehicles (UAVs) to provide broadband connectivity to remote areas. Amazon has announced a research plan to explore new means of delivery service making use of small UAVs to shorten the delivery time \cite{amazon}. The ABSOLUTE project in Europe aims to provide a low latency and large coverage networks using aerial base stations for capacity enhancements and public safety during temporary events \cite{absolute}. UAVs have also been used in the context of Internet of Things (IoT) to assist low power transmitters to send their data to a destination \cite{lien2011toward,dhillon2015wide}. Moreover, it has been shown that UAVs as low altitude platforms (LAPs) can be integrated into a cellular network to compensate cell overload or site outage \cite{rohde2012interference}, to enhance public safety in the failure of the base stations \cite{merwaday2015uav}, and to boost the capacity of the network.

To enable the deployment of such systems, it is important to model the reliability and coverage of the aerial platforms. In particular, recent studies have shown that the location and height of a UAV can significantly affect the air-to-ground (A2G) link reliability. In \cite{li2015drone} the optimal placement of UAVs, acting as relays, is studied without considering the impact of altitude on its coverage range. This issue is addressed in recent efforts \cite{al2014optimal,bor2016efficient,mozaffari2015unmanned}, where the authors show the dependency of the network performance on the altitude of UAVs. In \cite{al2014optimal} the impact of altitude on the coverage range of UAVs is studied, without providing a closed-form expression showing the dependency of the coverage radius to the system parameters. This work is based on a simple disk model, where the path loss is compared with a given threshold value. The authors in \cite{bor2016efficient} use a model similar to the one reported in \cite{al2014optimal} to investigate an optimum placement of multiple UAVs to cover the maximum possible users by applying a numerical search algorithms. In \cite{mozaffari2015unmanned} the average sum-rate is studied as a function of UAV altitude over a known coverage region. All these works, however, ignore the stochastic effects of multipath fading, which is an essential feature of an A2G communication link \cite{matolak2012air}. Moreover, the path loss modeling is based on free space conditions \cite{al2014modeling}, making it impossible to address the effects of low altitudes which is of practical importance due to the regulation constraints.

Besides the link characterization, the next step to further improve the reliability of a UAV link is to study the cooperation of the UAV with the existing terrestrial network. Cooperative communication techniques are known to significantly enhance the reliability of a wireless system. Three main cooperation protocols named amplify-and-forward (AF), decode-and-forward (DF) and coded cooperation (CC) together with outage performance have been studied in \cite{laneman2004cooperative,wang2011general,yu2012outage,behnad2013generalized,eddaghel2013outage,azari2014probabilistic}, relying on identical propagation and fading models for the first and second hops of the cooperative communication link. In \cite{azari2014probabilistic} the results show that CC and DF outperform AF in terms of outage performance, however more complexity is imposed to the system. In fact, DF has a moderate complexity and outage performance among these strategies and discards the disadvantage of AF where the noise is amplified in relays. All of these studies focus on terrestrial cooperative networks, however in this paper we incorporate ground-to-ground (G2G) and A2G communication links in an \textit{A2G cooperative communication} system. To the best of our knowledge, this is the first paper investigating the outage performance of such network as function of UAV altitude. This allows to quantify the benefits of ground relaying nodes at every altitude in A2G networks and enables to measure the performance gains when aerial and ground nodes cooperate. However, a fair and accurate analysis of an A2G cooperative network would require a generic framework which is valid for both G2G and A2G communication links, enabling to consider a continuum of different propagation conditions.

To address this need, we propose a generic framework that extends a widely used model for G2G wireless links towards A2G channels. Moreover, this framework considers an altitude-dependent path loss exponent and fading function, supplementing the existent A2G channel models. Using this analytical framework, we also significantly extend and refine our previous works \cite{glob,azari2016coverage,wi-uav}. In fact, the resutls presented in our initial study \cite{glob} are restricted to consider the same path loss exponent at different altitudes, which is an unrealistic assumption considering the different propagation environments. Moreover, we derive the height of the UAV for minimum outage probability at every user location and the optimum UAV altitude resulting in maximum coverage radius, which are not presented in \cite{azari2016coverage}.  


We also extend our previous work by analyzing an A2G cooperative network where each communication link posses different channel statistics. To this end, we analyze the height-dependent outage performance of the network when adopting a DF protocol at the ground relays, deriving a lower bound for the end-to-end outage probability. We quantify the benefits of the cooperative relaying, showing that the reliability of A2G communication strongly benefits from the introduction of ground relays. Furthermore, we show that the optimal transmission height is not much affected by the existence of ground relays, which allows to include them dynamically without loosing the optimality of the UAV location.

The rest of this paper is organized as follows. In Section \ref{system model} we present the system model. The problem is stated in Section \ref{problem statement}. The system outage performance and the optimum altitude in direct A2G communication links are derived in Section \ref{direct communication} followed by Section \ref{relaying communication} where the outage performance of the relaying strategy and the corresponding lower bound are analyzed. Our results are discussed and confirmed in Section \ref{numerical}. Finally, the main conclusions are presented in Section \ref{conclusion}. 



\section{System Model} \label{system model}

We consider a cellular system where a UAV provides wireless access to terrestrial mobile devices, serving as an aerial base station. The UAV is placed in an adjustable altitude $h$, aiming to communicate with a ground node D either directly or through a terrestrial relay R, as illustrated in Figure \ref{system}. The elevation angles of the UAV with regard to D and R are denoted as $\theta_\text{D}$ and $\theta_\text{R}$, respectively. The relaying nodes are randomly distributed, following a Poison point process (PPP) with a fixed density $\lambda$ in a disk $\C$ centered at the projection of the UAV on the ground, denoted as O. A polar coordinate system with the origin at O is considered, so that the location of the nodes D and R can be described by $(r_\text{D},\varphi_\text{D})$ and $(r_\text{R},\varphi_\text{R})$ respectively. We note that, in fact, in our model we could also consider aerial relays, but we believe ground relays are more meaningful from a practical perspective. 

In the sequel, Section \ref{channel modeling} describes the channel model and the corresponding characteristics of fading and path loss in terms of altitude are discussed in Section \ref{rician factor modeling}.  

\begin{figure}[t!]
  \centering
  \begin{tikzpicture}
  \node[above right] (img) at (0,0) {\includegraphics[width=0.6\columnwidth]{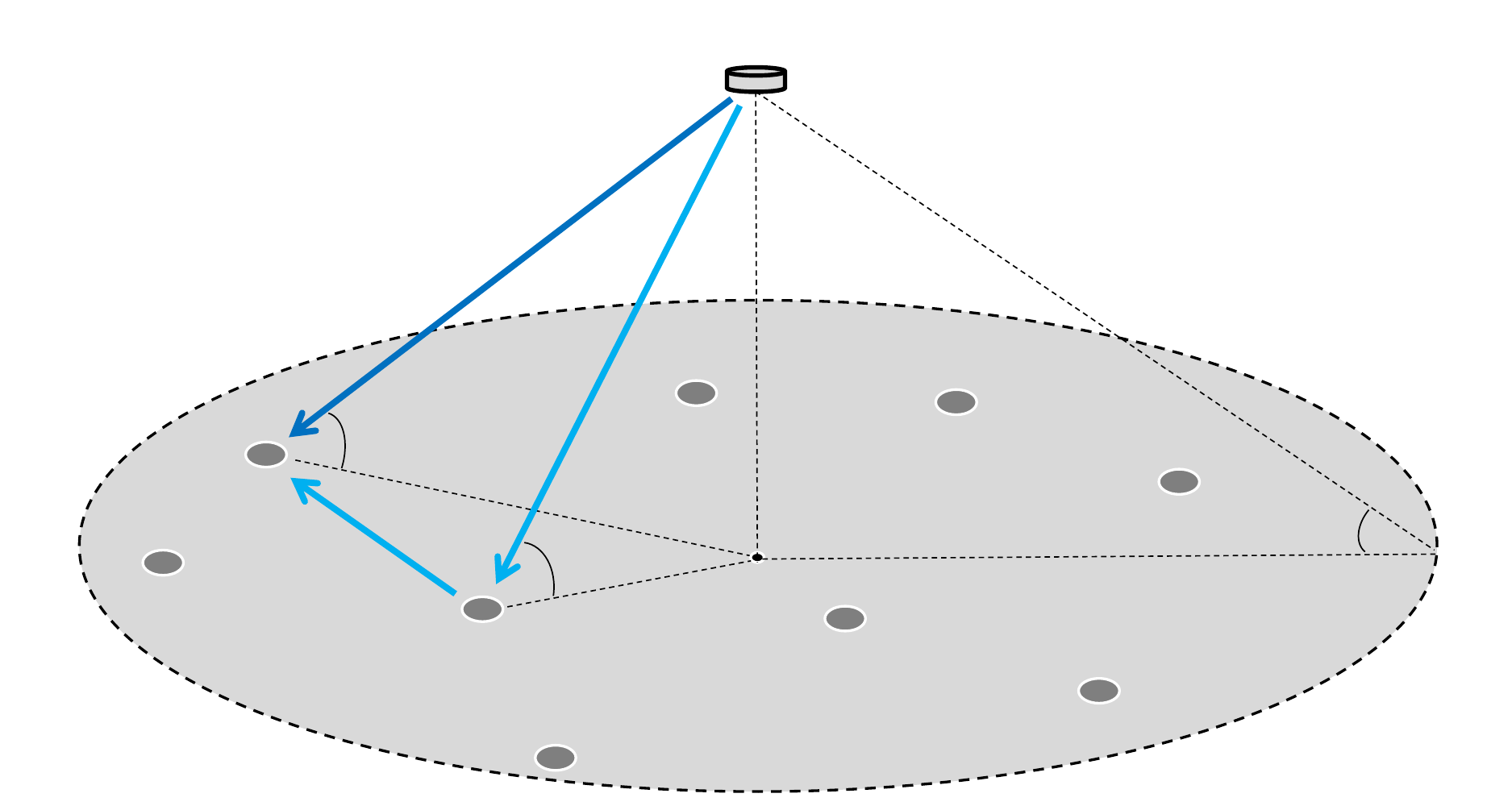}};
  \node at (155pt,45pt) {$\text{O}$};
  \node at (175pt,150pt) {$\text{UAV}$};
  \node at (50pt,65pt) {$\text{D}$};
  \node at (100pt,33pt) {$\text{R}$};
  \node at (120pt,55pt) {$\tR$};
  \node at (80pt,80pt) {$\tD$};
  \node at (270pt,20pt) {$\C$};
  \node at (265pt,60pt) {$\tC$};
  \end{tikzpicture}
  \caption{A typical air-to-ground (A2G) wireless networking using a UAV in presence of randomly distributed ground relays over the coverage region $\C$. In our analysis, we consider both a ground destination (D), as well as cooperative relays (R).}\label{system}
\end{figure}


\subsection{Channel Modeling} \label{channel modeling}

The wireless channel between any pair of nodes is assumed to experience small-scale fading and large-scale path loss. Therefore, the instantaneous SNR between the UAV, denoted as U, and a ground receiver X, which is either R or D, can be modeled as
\begin{equation}\label{GUX}
\GUX = \frac{A \PU}{N_0 \lUX^{~\aUX}}~\OUX;~~~\text{X} \in \{\text{D},\text{R}\},
\end{equation} 
where $\PU$ is the UAV's transmit power, $N_0$ is the noise power, $\lUX$ is the distance between the UAV and the node X, $\aUX$ is the path loss exponent, $A$ is a constant which depends on the system parameters such as operating frequency and antenna gain, and $\OUX$ is the fading power where $\overline{\Omega}_\text{UX} = 1$. 

In order to model the small-scale fading between the UAV and any ground node X, a Rician distribution is an adequate choice due to the possible combination of LoS and multipath scatterers that can be experienced at the receiver \cite{kandeepan2014aerial,kandeepan2011energy,matolak2012air}. Using this model, $\Omega_\text{UX}$ adopts a non-central chi-square probability distribution function (PDF) expressed as \cite{simon2005digital}
\begin{equation}\label{OUX}
f_{\OUX}(\omega) =
\frac{(\KUX+1)e^{-\KUX}}{\overline{\Omega}_\text{UX}}~e^\frac{-(\KUX+1)\omega}{\overline{\Omega}_\text{UX}}~
I_0\left(2\sqrt{\frac{\KUX(\KUX+1)\omega}{{\overline{\Omega}_\text{UX}}}}\right);~~~\omega\geq0.
\end{equation}         
Above $I_0(\cdot)$ is the zero-order modified Bessel function of the first kind, and $\KUX$ is the Rician factor defined as the ratio of the power in the LoS component to the power in the non-LoS multipath scatters. In this representation, $\KUX$ reflects the severity of the fading. In effect, if $\KUX = 0$, the equation \eqref{OUX} is reduced to an exponential distribution indicating a Rayleigh fading channel, while if $\KUX \rightarrow \infty$ the channel converges to an AWGN channel. Accordingly, more severe fading conditions correspond to a channel with lower $\KUX$.

Following a similar rationale, the instantaneous SNR between an arbitrary relaying node $\text{R}$ as the transmitter and the receiver D can be modeled as
\begin{equation}\label{G_G1G2}
\GRD = \frac{A \PR}{N_0 \lRD^{~\aRD}}~\ORD,
\end{equation} 
where $\PR$ is the relay's transmit power which is assumed to be the same for all the relaying nodes, $\lRD$ is the distance between R and D, $\aRD$ is the G2G path loss exponent between R and D, and $\ORD$ is also modeled by a Rician distribution with $\overline{\Omega}_{{\text{R}}{\text{D}}} = 1$, where its PDF can be written using \eqref{OUX} but considering a different Rician factor denoted as $\KRD$. Finally, it is assumed that the fading statistics between different pair of nodes are independent.


\subsection{Rician Factor and Path Loss Exponent Modeling} \label{rician factor modeling}

Intuitively, the height of the UAV affects the propagation characteristics of the A2G communication link since the LoS condition and the environment between U and X alter as $\tX$ varies. It has been shown that the elevation angle of the UAV with respect to the ground node plays a dominant role in determining the Ricean factor \cite{hagenauer1987maritime,shimamoto2006channel}. Consequently, we model the Ricean factor as a function of the elevation angle by introducing the non-decreasing function $\KUX = K(\tX)$. Indeed, a larger $\tX$ implies a higher LoS contribution and less multipath scatters at the receiver resulting in a larger $\KUX$. Thus, G2G communication ($\tX = 0$) experience the most severe multipath conditions and hence $\KUX$ takes its minimal value $\kappa_0 = K(0)$, whereas at $\tX = \pi/2$ it adopts the maximum value $\kappa_{{\pi}/{2}} = K({\pi}/{2})$. Note that $\KRD = \kappa_0$, as it corresponds to a G2G link.

Following a similar rationale, the path loss is also influenced by the elevation angle such that $\aUX$ might decrease as the UAV's elevation angle $\tX$ increases. In this way, G2G links, where $\tX = 0$, endure the largest $\aUX$ while at $\tX = \pi/2$ the value of $\aUX$ is the smallest. Therefore, we model the path loss exponent dependency on the elevation angle by introducing a non-increasing function of $\aUX = \a(\tX)$ and define the shorthand notations $\a_0 = \a(0)$ and $\a_{\pi/2} = \a(\pi/2)$. Accordingly, G2G links have $\aRD = \a_0$. 

The analysis presented in the following sections is based on a general dependency of $K(\tX)$ and $\a(\tX)$ on $\tX$, in order to provide comprehensive results which are valid over a variety of conceivable scenarios. Therefore, in any concrete scenario the results can be instantiated by determining the estimated functional form for $K(\tX)$ and $\a(\tX)$. However, in Section \ref{numerical} we adopt a particular parameterized family of functions in order to illustrate our results.




\section{Problem Statement} \label{problem statement}

An A2G channel benefits from a lower path loss exponent and lighter small-scale fading compared to a G2G link, while having a longer link length which deteriorates the received SNR. Interestingly, these two opposite effects can be balanced by optimizing the UAV height. Thus, our fundamental concern is to find the best position of a UAV for optimizing the link reliability, and to study if such an optimized positioning can have a positive impact on the UAV coverage area. Finally, we are also interested in studying the effect of ground relays --as a means of reliability enhancement and coverage extension-- on the optimal UAV altitude.


Link reliability is usually evaluated using the outage probability, which is defined as 
\begin{equation} \label{outage definition Y}
\PoutY \triangleq \mathds{P}(\Gamma \leq \xi),
\end{equation}
where $\mathds{P}(\text{E})$ indicates the probability of an event E, $\Gamma$ is the instantaneous SNR at the receiver, $\xi$ is the SNR threshold which depends on the sensitivity of the receiver, and $\text{Y} \in \{\text{dc},\text{rc},\text{cc}\}$ indicates the strategy employed for communication which is an abbreviation for direct communication, relaying communication and cooperative communication, respectively. 

The A2G channel characteristic and hence the received SNR at the ground destination D is dependent on the relative position of the UAV to D determined by $h$ and $r_\text{D}$, and hence the outage probability can be written as $\PoutY = \PoutY(\rD,h)$. At a given $\rD$, the optimum altitude of the UAV $\hY$ for maximum reliable link is defined as
\begin{equation}\label{h*Y}
\hY = \arg\min_{h\in[0,\infty)} \PoutY(\rD,h);~~\text{Y} \in \{\text{dc},\text{rc},\text{cc}\}.
\end{equation}

On the other hand, for a given altitude $h$ the radius of UAV's coverage area $\C$ is defined as the maximum distance $\rD$ within which the outage probability remains below or equals to a target $\varepsilon$. For a larger $\rD$ the outage probability $\PoutY(\rD,h)$ is higher due to the larger path loss and more severe fading. Thus, the boundary of the coverage region is characterized by
\begin{equation}\label{PoutY=eps}
\PoutY(\rD,h) = \varepsilon.
\end{equation}
For a given $h$, we denote the radius $\rD$ satisfying the above equation as $\rY$. The set of all pairs $(\rY,h)$ that meet the above equation constitute a system \textit{configuration space} denoted as $\SY$. We intend to obtain the maximum coverage radius $\rCY$ in $\SY$, by locating the UAV at the optimum altitude $\hOPTY$. To this end, the problem can be formulated as
\begin{align}\label{rCY} \nonumber
&\rCY = \max_{h\in[0,\infty)} \rY,  \\
&s.t.~~~(\rY,h) \in \SY
\end{align}
and $\hOPTY$ is the altitude that $(\rCY,\hOPTY) \in \SY$, i.e. the optimal altitude at which the coverage radius is maximized.

Note that increasing $\rD$ leads to an increase in the corresponding minimum outage probability $\PoutY(\rD,\hY)$ (incurred at altitude $\hY$) due to the adverse effect of the larger link length. Accordingly, at the border of coverage region, i.e. $\rD = \rCY$, $\PoutY(\rD,\hY)$ reaches the predefined target outage performance $\varepsilon$. In this manner, the optimization problem in \eqref{rCY} is related to \eqref{h*Y}, but we adopt different approaches to solve them. 


\section{Direct Air-to-Ground Communication} \label{direct communication}

This section we analyze the outage probability of transmissions going directly from the UAV to the destination. The outage probability is studied as function of UAV altitude in Section \ref{DirectOutProb} and then the results are used to investigate the optimal placement of the UAV for maximum reliability and range of communication in Section \ref{DirectMaxCov}.


\subsection{Height-Dependent Outage Probability} \label{DirectOutProb}

Following \eqref{outage definition Y} the direct communication outage probability of the UAV--D link $\Poutdc$ is defined as $\Poutdc = \mathds{P}(\GUD \leq \xi)$. By using \eqref{GUX} and \eqref{OUX}, the outage probability can be rewritten as
\begin{align} \nonumber
\Poutdc&(\rD,h) = \mathds{P}\left(\frac{A \PU}{N_0 \lUD^{~\aUD}}~\OUD \leq \xi \right) \\ \label{Pout=1-Q}
 &= 1-Q\left(\sqrt{2 K(\tD)},\sqrt{2\xi~[1+K(\tD)]~{\lUD}^{\a(\tD)}/\gamma_\text{U}}\right),
\end{align}
where $\lUD = \sqrt{r^2_\text{D}+h^2}$, $\tD = \tan^{-1}(h/\rD)$, $Q(\cdot,\cdot)$ is the first order Marcum Q--function, and $\gamma_\text{U}$ is a shorthand notation for
\begin{equation} \label{LinkBudget}
\gamma_\text{U} \triangleq \frac{A \PU}{N_0}.
\end{equation}

We propose the following theorem to solve the optimization in \eqref{h*Y}, where the position of the UAV for minimum outage probability at every $\rD$ is obtained.

\begin{theorem} \label{t-hat}
For a given $\rD$ the optimal UAV altitude $\hdc$, as defined in \eqref{h*Y}, is given by
\begin{equation}
\hdc = \rD \cdot \tan(\tetdc),
\end{equation}
where $\tetdc$ is approximately obtained from
\begin{equation}\label{tD*Eq}
\sqrt{\frac{\xi}{\gamma_\text{U}}\left[\frac{\rD}{\cos(\tD)}\right]^{\a(\tD)}} \left[\frac{K'(\tD)}{K(\tD)}+\a'(\tD)\ln\left(\frac{\rD}{\cos(\tD)}\right)+\a(\tD)\tan(\tD) \right] =  \frac{K'(\tD)}{K(\tD)}.
\end{equation} 
\end{theorem}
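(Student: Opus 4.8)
\noindent\emph{Proof proposal.}

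The plan is to reduce \eqref{h*Y} to a one-dimensional stationarity problem. Since \eqref{Pout=1-Q} depends on $h$ only through $\tD=\tan^{-1}(h/\rD)$ and $\lUD$, and since $\lUD=\rD/\cos\tD$ while $h\mapsto\tD$ is a strictly increasing bijection of $[0,\infty)$ onto $[0,\pi/2)$, minimizing $\Poutdc(\rD,h)$ over $h$ is equivalent to maximizing, over $\tD\in[0,\pi/2)$,
\begin{equation*}
Q\!\left(a(\tD),b(\tD)\right),\qquad a(\tD)=\sqrt{2K(\tD)},\qquad b(\tD)=\sqrt{\tfrac{2\xi}{\gamma_\text{U}}\,[1+K(\tD)]\,\Big(\tfrac{\rD}{\cos\tD}\Big)^{\a(\tD)}}.
\end{equation*}

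First I would differentiate with respect to $\tD$ by the chain rule, using the standard derivative identities for the first-order Marcum $Q$-function, $\partial_b Q(a,b)=-b\,e^{-(a^2+b^2)/2}I_0(ab)$ and $\partial_a Q(a,b)=b\,e^{-(a^2+b^2)/2}I_1(ab)$. Because the common factor $b\,e^{-(a^2+b^2)/2}$ is strictly positive on $[0,\pi/2)$, the first-order condition reduces to the exact relation $I_1(ab)\,a'(\tD)=I_0(ab)\,b'(\tD)$. I would then compute $a'(\tD)=K'(\tD)/\sqrt{2K(\tD)}$ and, from $\tfrac{\mathrm d}{\mathrm d\tD}\ln b^2=\frac{K'(\tD)}{1+K(\tD)}+\a'(\tD)\ln\frac{\rD}{\cos\tD}+\a(\tD)\tan\tD$, also $b'(\tD)=\tfrac12 b(\tD)\big[\frac{K'(\tD)}{1+K(\tD)}+\a'(\tD)\ln\frac{\rD}{\cos\tD}+\a(\tD)\tan\tD\big]$.

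The approximate form \eqref{tD*Eq} then follows from two simplifications valid in the operating regime of interest: (i) $I_1(z)/I_0(z)\approx 1$, accurate when $z=ab=\sqrt{2K(\tD)}\,b(\tD)$ is large, i.e.\ whenever the Rician factor is not small; and (ii) $1+K(\tD)\approx K(\tD)$, so that $b(\tD)\approx\sqrt{2K(\tD)}\cdot\sqrt{\tfrac{\xi}{\gamma_\text{U}}(\rD/\cos\tD)^{\a(\tD)}}$ and $\frac{K'}{1+K}\approx\frac{K'}{K}$. Under (i) the first-order condition becomes $a'(\tD)=b'(\tD)$; substituting the expressions above, applying (ii), and dividing through by $\sqrt{2K(\tD)}/2$ collapses it to exactly \eqref{tD*Eq}, and $\hdc=\rD\tan(\tetdc)$ is then immediate from $\tD=\tan^{-1}(h/\rD)$.

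The main obstacle I anticipate is not this computation but the classification/optimality argument: one must verify that the point produced by \eqref{tD*Eq} is the \emph{global minimizer} of $\Poutdc(\rD,\cdot)$ rather than a maximizer or a saddle, and handle the endpoints. Here I would use the limiting behaviour $Q(a(\tD),b(\tD))\to 0$ as $\tD\to\pi/2$ (since $b(\tD)\to\infty$ while $a(\tD)$ stays bounded), together with the monotonicity of $K(\cdot)$ and $\a(\cdot)$, to argue that $\Poutdc$ is first decreasing and then increasing in $\tD$, so the unique interior stationary point is the minimum; if \eqref{tD*Eq} has no root in $[0,\pi/2)$, the optimum degenerates to $h=0$. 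I would also make the error incurred by approximations (i)--(ii) explicit --- the theorem claims only that $\tetdc$ is \emph{approximately} a root --- noting that it vanishes as $\kappa_0\to\infty$. Those robustness and optimality checks, rather than the algebra, are where the care is needed.
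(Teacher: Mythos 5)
Your proposal follows essentially the same route as the paper's own proof: setting the $\tD$-derivative of $\Poutdc$ (equivalently of the Marcum $Q$-function) to zero via the identities $\partial_a Q = b\,e^{-(a^2+b^2)/2}I_1(ab)$ and $\partial_b Q = -b\,e^{-(a^2+b^2)/2}I_0(ab)$, computing $a'(\tD)$ and $b'(\tD)$ from $\lUD=\rD/\cos\tD$, and then invoking $I_1/I_0\approx 1$ together with $K(\tD)\gg 1$ to collapse the stationarity condition to \eqref{tD*Eq}; the algebra matches. Your added remarks on verifying that the stationary point is the global minimizer and on quantifying the approximation error go beyond what the paper's proof does, and would be a sound refinement, but they do not change the argument.
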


\begin{proof}
The proof is given in Appendix \ref{t-hat proof}.
\end{proof}

Note that \eqref{tD*Eq} shows that $\tetdc$ is dependent on $\rD$ and hence the optimal altitude $\hdc = \rD \cdot \tan(\tetdc)$ is not generally a linear function of $\rD$. In fact, at short distances of $\rD$ an increase in the elevation angle $\tD$ could be more beneficial than that of large distances since the increase in the link length is smaller while the impact on the path loss exponent and the Rician factor is the same for any $\rD$. Therefore \eqref{tD*Eq} leads to a larger value of $\tetdc$ for an smaller $\rD$. 

In the next subsection we obtain the maximum coverage radius of the UAV $\rCdc$ and the corresponding optimum altitude $\hOPTdc$.


\subsection{Maximum Coverage Area} \label{DirectMaxCov}

The implicit relationship between $h$ and $\rdc$ in \eqref{PoutY=eps} can be rewritten using \eqref{Pout=1-Q} as
\begin{equation} \label{Q=1-eps}
Q\left(\sqrt{2 K(\tC)},\sqrt{2\xi~[1+K(\tC)]~{\lC}^{\a(\tC)}/\gamma_\text{U}}\right) = 1-\varepsilon,
\end{equation}
where $\lC = \sqrt{(\rdc)^2+h^2}$, $\tC = \tan^{-1}(h/\rdc)$. In order to find $\rCdc$ and $\hOPTdc$ from \eqref{rCY}, first we determine the configuration space using the following theorem.

\begin{theorem}
The configuration space $\Sdc$ is a one-dimensional curve in the $\rD$--$h$ plane, which is formed by all $(\rdc,h)$ obtained from
\begin{subequations}\label{h,rC}
\begin{align} \label{h}
h &= \Lambda(\tC) \cdot \sin(\tC), \\ \label{rC}
\rdc &= \Lambda(\tC) \cdot \cos(\tC),
\end{align}
where
\begin{align} \label{Lambda}
\Lambda(\tC) &= \left[\frac{\gamma_\text{U}\,\yc^2}{\xi\,(2+\xc^2)}\right]^{\frac{1}{\a(\tC)}}, \\ \label{x,y,tC}
\xc = \sqrt{2 K(\tC)},~\yc &= Q^{-1}(\xc,1-\varepsilon),~\tC \in [0,\frac{\pi}{2}].
\end{align}
\end{subequations}
Above, $Q^{-1}(\xc,\cdot)$ indicates the inverse Marcum Q--function with respect to its second argument.
\end{theorem}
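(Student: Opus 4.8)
The plan is to re-parametrize the first quadrant of the $\rD$--$h$ plane by the ``polar'' coordinates that are natural for the UAV geometry, namely $\rdc=\lC\cos\tC$ and $h=\lC\sin\tC$, where $\lC=\sqrt{(\rdc)^2+h^2}\ge 0$ and $\tC=\tan^{-1}(h/\rdc)\in[0,\tfrac{\pi}{2}]$: every pair $(\rdc,h)$ with $\rdc,h\ge 0$, not both zero, has a unique such representation with $\lC>0$. The point of this change of variables is that in \eqref{Q=1-eps} the quantities $K(\tC)$ and $\a(\tC)$ depend on $(\rdc,h)$ only through the angle $\tC$; hence, once $\tC$ is fixed, the whole left-hand side of \eqref{Q=1-eps} depends on $(\rdc,h)$ only through the single scalar $\lC^{\a(\tC)}$. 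Thus the two-dimensional constraint \eqref{Q=1-eps} decouples into a one-parameter family, indexed by $\tC$, of scalar equations in the single unknown $\lC$.

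Next I would solve that scalar equation for each fixed $\tC$. Writing $\xc=\sqrt{2K(\tC)}$ as in \eqref{x,y,tC}, the first argument of the Marcum Q--function in \eqref{Q=1-eps} is the constant $\xc$, while its second argument $b(\lC):=\sqrt{2\xi[1+K(\tC)]\lC^{\a(\tC)}/\gamma_\text{U}}$ is a continuous, strictly increasing bijection of $[0,\infty)$ onto $[0,\infty)$ (here one uses $\a(\tC)>0$). Since $b\mapsto Q(\xc,b)$ is continuous and strictly decreasing with $Q(\xc,0)=1$ and $Q(\xc,b)\to 0$ as $b\to\infty$, and since $1-\varepsilon\in(0,1)$, the equation $Q(\xc,b)=1-\varepsilon$ has a unique root $b=\yc:=Q^{-1}(\xc,1-\varepsilon)>0$. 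Therefore \eqref{Q=1-eps} is equivalent to $b(\lC)=\yc$, which — squaring, using $1+K(\tC)=(2+\xc^2)/2$, and isolating $\lC$ — has the unique solution $\lC^{\a(\tC)}=\gamma_\text{U}\yc^2/[\xi(2+\xc^2)]$, i.e.\ $\lC=\Lambda(\tC)$ with $\Lambda$ as in \eqref{Lambda}. Substituting $\lC=\Lambda(\tC)$ back into $\rdc=\lC\cos\tC$ and $h=\lC\sin\tC$ reproduces \eqref{h}--\eqref{rC}.

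Finally I would assemble these observations into the stated description of $\Sdc$. By definition $\Sdc$ is the set of pairs $(\rdc,h)$ satisfying \eqref{PoutY=eps}, which by \eqref{Q=1-eps} is exactly the solution set of the decoupled family above; the preceding paragraph shows that $(\rdc,h)$ belongs to $\Sdc$ if and only if it equals $(\Lambda(\tC)\cos\tC,\Lambda(\tC)\sin\tC)$ for $\tC=\tan^{-1}(h/\rdc)$. Hence $\Sdc$ is precisely the image of the map $\tC\mapsto(\Lambda(\tC)\cos\tC,\Lambda(\tC)\sin\tC)$ on $[0,\tfrac{\pi}{2}]$, and since $\Lambda$ is continuous — being built from the continuous functions $K$, $\a$ and $Q^{-1}(\cdot,1-\varepsilon)$ by elementary operations, with $\a$ bounded away from $0$ — this image is a one-dimensional curve, the endpoint $\tC=\tfrac{\pi}{2}$ corresponding to $\rdc=0$.

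The main obstacle is the analytic input on the Marcum Q--function: one must justify that $b\mapsto Q(\xc,b)$ is continuous and strictly decreasing from $1$ to $0$ on $[0,\infty)$, so that $Q^{-1}(\xc,\cdot)$ is well defined on $(0,1)$ and $\yc$ is its unique preimage of $1-\varepsilon$. This follows from the integral representation $Q(a,b)=\int_b^\infty t\,e^{-(t^2+a^2)/2}I_0(at)\,dt$ (the integrand is positive, giving strict monotonicity; its total integral equals $1$, giving $Q(\xc,0)=1$; and the tail vanishes, giving $Q(\xc,b)\to 0$), but it is the one non-algebraic ingredient and should be invoked explicitly. A secondary item to verify is $\a(\tC)>0$ for all $\tC\in[0,\tfrac{\pi}{2}]$, needed both for $b(\lC)$ to be a bijection of $[0,\infty)$ and for the $\a(\tC)$-th root in \eqref{Lambda} to make sense; this is consistent with the path-loss modeling of Section~\ref{rician factor modeling}. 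Everything else is routine algebra.
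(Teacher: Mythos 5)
Your proposal is correct and follows essentially the same route as the paper's proof: introduce $\xc$ and $\yc$, invert the Marcum Q--function in its second argument to get $\yc = Q^{-1}(\xc,1-\varepsilon)$, solve for the link length $\lC = \Lambda(\tC)$, and recover $(h,\rdc)$ from the polar decomposition $h=\lC\sin\tC$, $\rdc=\lC\cos\tC$. The only difference is that you explicitly justify the strict monotonicity of $b\mapsto Q(\xc,b)$ (hence uniqueness of $\lC$ for each $\tC$) and the positivity of $\a(\tC)$, which the paper takes for granted; this is added rigor rather than a different argument.
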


\begin{proof}
Using the following auxiliary variables
\begin{subequations}\label{x,y-tC}
\begin{align}\label{x-tC}
\xc &= \sqrt{2 K(\tC)}, \\ \label{y-tC}
\yc &= \sqrt{2\xi~[1+K(\tC)]~{\lC}^{\a(\tC)}/\gamma_\text{U}},
\end{align}
\end{subequations}
the equation in \eqref{Q=1-eps} can be rewritten as
\begin{equation}
Q(\xc,\yc) = 1-\varepsilon,
\end{equation}
or equivalently
\begin{equation}\label{yc = Qinv}
\yc = Q^{-1}(\xc,1-\varepsilon).
\end{equation}
From \eqref{x,y-tC} and \eqref{yc = Qinv} one can write
\begin{equation} \label{lC=}
  \lC = \left[\frac{\gamma_\text{U}\,[Q^{-1}(\xc,1-\varepsilon)]^2}{\xi\,(2+\xc^2)}\right]^{\frac{1}{\a(\tC)}} \triangleq \Lambda(\tC).
\end{equation}
By using \eqref{lC=} in $h = \lC \cdot \sin(\tC)$ and $\rdc = \lC \cdot \cos(\tC)$ the desired result is attained.
\end{proof}

Note that all elements of $\Sdc$ can be described by exploring different values of $\tC$. In fact, when $\tC = 0$, one obtains $h = 0$ and $\rdc = \Lambda(0)$, while $h$ grows with $\tC$ reaching its maximum $h = \Lambda(\pi/2)$ when $\tC = \pi/2$ and $\rdc = 0$. The shape of the configuration space in $\rD$--$h$ plane is illustrated in Figure \ref{Sdc}, where $\Lambda(\tC)$ and $\tC$ are the radius and angle of $\Sdc$ in polar coordinates respectively. This figure shows that at an elevation angle denoted as $\tCoptdc$ the coverage radius reaches its maximum $\rCdc$. In order to find $\tCoptdc$ we simplify $\Lambda(\tC)$ in \eqref{Lambda} by proposing an approximate analytical solution for the inverse Marcum Q--function in the following lemma.

\begin{figure}[t!]
  \centering
  \begin{tikzpicture}
  \node[above right] (img) at (0,0) {\includegraphics[width=0.5\columnwidth]{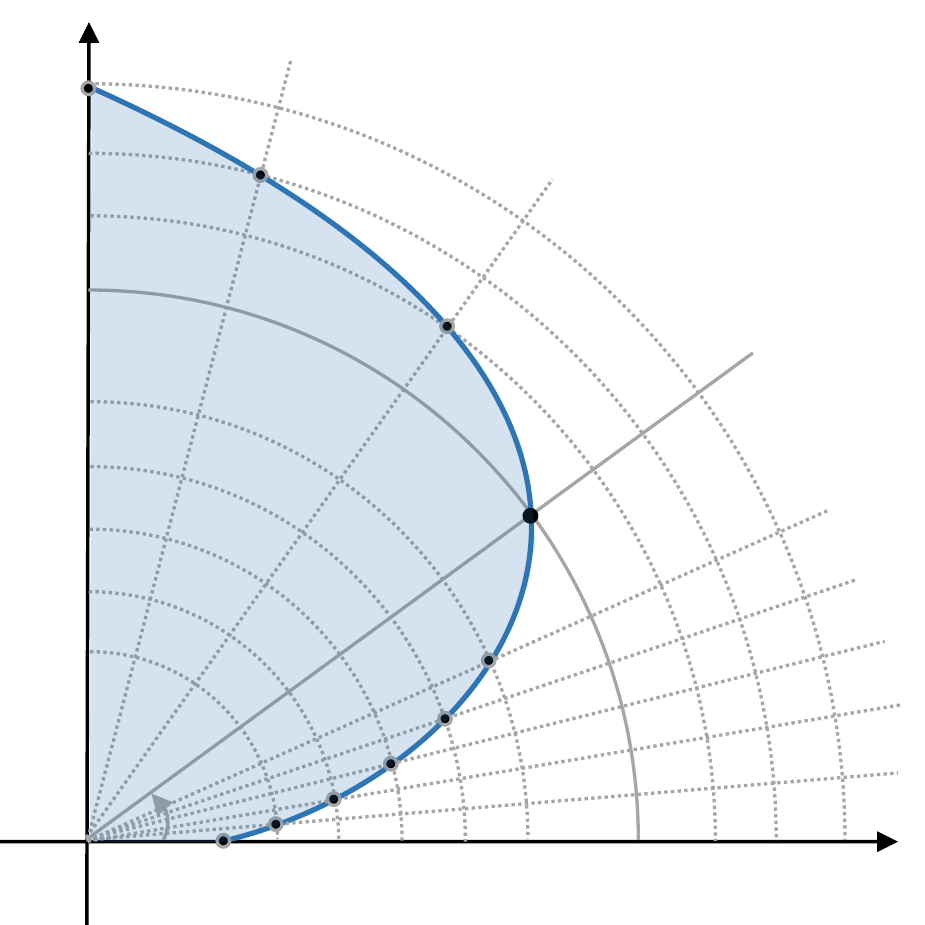}};
  \node at (260pt,25pt) {$\rD$};
  \node at (25pt,255pt) {$h$};
  \node at (60pt,35pt) {$\tCoptdc$};
  \node at (105pt,200pt) {$\Sdc$};
  \node at (80pt,150pt) {$\Poutdc < \varepsilon$};
  \node at (170pt,160pt) {$\Poutdc > \varepsilon$};
  \node at (175pt,110pt) {$(\rCdc,\hOPTdc)$};
  \node at (180pt,15pt) {$\Lambda(\tCoptdc)$};
  \node at (15pt,15pt) {O};
  \end{tikzpicture}
  \caption{Configuration space $\Sdc$ on $\rD$--$h$ plane.}\label{Sdc}
\end{figure}

\begin{lemma} \label{InvQ}
The inverse Marcum Q--function with respect to its second argument, i.e. $y = Q^{-1}(x,1-\varepsilon)$, is approximately given by
\begin{equation} \label{InvMQ}
y = 
     \begin{cases}
       \sqrt{-2\ln(1-\varepsilon)}~e^{\frac{x^2}{4}} & ;~x \leq x_0 \\
       x+\frac{1}{2Q^{-1}(\varepsilon)}\ln\left[\frac{x}{x-Q^{-1}(\varepsilon)}\right]-Q^{-1}(\varepsilon) & ;~x > x_0~\wedge~Q^{-1}(\varepsilon) \neq 0  \\
       x+\frac{1}{2x} & ;~x > x_0~\wedge~Q^{-1}(\varepsilon) = 0 \ 
     \end{cases}
\end{equation}
where $x_0$ is the intersection of the sub-functions at $x > \max[0,Q^{-1}(\varepsilon)]$ and $Q^{-1}(\cdot)$ is the inverse Q--function.
\end{lemma}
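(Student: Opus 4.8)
The plan is to obtain the two main sub-functions in \eqref{InvMQ} separately, each as the exact inversion of a \emph{tractable approximation} of the Marcum $Q$--function $Q(x,y)$ that is sharp in a different range of $x$, and then to glue the two expressions at the abscissa $x_0$ where they coincide. Throughout write $z=y-x$, and let $Q(\cdot)$, $Q^{-1}(\cdot)$ denote the ordinary Gaussian $Q$--function and its inverse, so that $Q(-Q^{-1}(\varepsilon))=1-\varepsilon$.

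\emph{Small~$x$ (near-Rayleigh fading).} The idea is to Taylor-expand $Q(x,y)$ in its first argument about $x=0$. The zeroth-order term is $Q(0,y)=e^{-y^2/2}$, the first-order term vanishes by the symmetry of the underlying two-dimensional Gaussian integral, and a short computation -- using the integral form of $Q(x,y)$, the identity $\frac{d^2}{dt^2}e^{-t^2/2}=(t^2-1)e^{-t^2/2}$, and a pass to polar coordinates -- gives $\partial_x^2 Q(x,y)\big|_{x=0}=\tfrac{y^2}{2}e^{-y^2/2}$. Hence $Q(x,y)\approx e^{-y^2/2}\bigl(1+\tfrac{x^2y^2}{4}\bigr)$, which I would re-package as $\exp\!\bigl(-\tfrac{y^2}{2}e^{-x^2/2}\bigr)$ (the two agree to second order in $x$). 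Setting this equal to $1-\varepsilon$ and solving for $y$ gives the first line of \eqref{InvMQ}.

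\emph{Large~$x$ (near-AWGN).} Starting from $Q(x,y)=\int_y^\infty t\,e^{-(t^2+x^2)/2}I_0(xt)\,dt$ and inserting the large-argument asymptotics $I_0(\zeta)\sim e^{\zeta}/\sqrt{2\pi\zeta}$ yields $Q(x,y)\approx\int_y^\infty\sqrt{t/(2\pi x)}\,e^{-(t-x)^2/2}\,dt=\int_{z}^{\infty}\sqrt{1+u/x}\,\tfrac{1}{\sqrt{2\pi}}e^{-u^2/2}\,du$. Approximating the slowly varying factor $\sqrt{1+u/x}$ near $u/x=0$ (for instance by $e^{u/(2x)}$ and completing the square, or by $1+\tfrac{u}{2x}$ and integrating term by term) reduces this to a Gaussian $Q$--function with a shifted argument, of the form $Q(x,y)\approx Q(z)+\tfrac{1}{2x}\tfrac{1}{\sqrt{2\pi}}e^{-z^2/2}$. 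Setting this equal to $1-\varepsilon$ and linearizing $Q$ about $z=-Q^{-1}(\varepsilon)$ gives $z=-Q^{-1}(\varepsilon)+\tfrac{1}{2x}+O(x^{-2})$; re-summing the $O(1/x)$ tail into the closed form $\tfrac{1}{2Q^{-1}(\varepsilon)}\ln\!\bigl[x/(x-Q^{-1}(\varepsilon))\bigr]$, whose expansion in powers of $1/x$ begins with $\tfrac{1}{2x}$, produces the second line of \eqref{InvMQ}. When $\varepsilon=\tfrac12$ we have $Q^{-1}(\varepsilon)=0$ and this prefactor is singular, so one passes to the limit $Q^{-1}(\varepsilon)\to0$, in which $\tfrac{1}{2Q^{-1}(\varepsilon)}\ln[x/(x-Q^{-1}(\varepsilon))]\to\tfrac{1}{2x}$, recovering the third line. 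Finally, on $x>\max[0,Q^{-1}(\varepsilon)]$ -- the constraint making the logarithm real -- the first sub-function grows like $e^{x^2/4}$ while the second grows only linearly in $x$, so they coincide at some point, which defines $x_0$; one then uses the first expression for $x\le x_0$ and the second for $x>x_0$. (One should check that this crossover is unambiguous in the relevant parameter range.)

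\emph{Main obstacle.} Because \eqref{InvMQ} is an approximation rather than an identity, the substance of the argument is the two derivations above, and the genuinely delicate point is accuracy control in the \emph{intermediate} range near $x_0$, where $x$ is neither small enough for the second-order Taylor expansion nor large enough for the $I_0$ asymptotics and the linearized inversion to be sharp; relatedly, one must justify the particular packaging of the $O(1/x)$ correction into the logarithmic form (rather than simply $\tfrac{1}{2x}$) and confirm that $x_0$ is well defined. In the paper this would be complemented by a numerical comparison of \eqref{InvMQ} with the exact inverse Marcum $Q$--function.
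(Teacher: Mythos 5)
Your derivation is sound and arrives at the same three branches, but it follows a genuinely different route from the paper. The paper never approximates $Q(x,y)$ itself: it differentiates the level-set relation $Q(x,y)=1-\varepsilon$ implicitly, using $\partial_x Q = y\,e^{-(x^2+y^2)/2}I_1(xy)$ and $\partial_y Q = -y\,e^{-(x^2+y^2)/2}I_0(xy)$, to obtain the ODE $\frac{dy}{dx}=\frac{I_1(xy)}{I_0(xy)}$; it then replaces the Bessel ratio by $xy/2$ for small arguments (giving $y=y_0e^{x^2/4}$ with $y_0$ fixed by $Q(0,y_0)=e^{-y_0^2/2}$) and by $1-\frac{1}{2xy}$ for large arguments, bootstrapping $y\approx x+\eta_\varepsilon$ back into the ODE so that integrating $1-\frac{1}{2x(x+\eta_\varepsilon)}$ by partial fractions produces the logarithmic term exactly, with $\eta_\varepsilon=-Q^{-1}(\varepsilon)$ fixed via the limit $Q(x,y)\to Q(y-x)$. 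You instead approximate $Q(x,y)$ directly in each regime (second-order Taylor in $x$; large-argument $I_0$ asymptotics giving $Q(z)+\frac{1}{2x}\frac{e^{-z^2/2}}{\sqrt{2\pi}}$ with $z=y-x$) and invert. Your computations are correct (e.g.\ $\partial_x^2Q|_{x=0}=\frac{y^2}{2}e^{-y^2/2}$ and the linearized inversion $y\approx x-Q^{-1}(\varepsilon)+\frac{1}{2x}$ both check out), and your route makes the origin of the constants $\sqrt{-2\ln(1-\varepsilon)}$ and $-Q^{-1}(\varepsilon)$ more transparent. What it does not deliver--and what you correctly flag as the weak points--are the two specific "packagings'': the exponential form $e^{x^2/4}$ from a quadratic Taylor polynomial, and the logarithm $\frac{1}{2Q^{-1}(\varepsilon)}\ln\bigl[\frac{x}{x-Q^{-1}(\varepsilon)}\bigr]$ from a bare $\frac{1}{2x}$ correction; in the paper both arise organically from solving the approximate ODEs (the exponential as the solution of $y'=xy/2$, the logarithm as the antiderivative of $\frac{1}{2x(x+\eta_\varepsilon)}$), which is precisely the justification your resummation step is missing. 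The treatment of the $Q^{-1}(\varepsilon)=0$ case and the definition of $x_0$ as the unique crossing of a rapidly growing and a decreasing-then-linear branch on $x>\max[0,Q^{-1}(\varepsilon)]$ coincide with the paper's, and, as in the paper, the overall accuracy (especially in the intermediate range near $x_0$) is ultimately confirmed numerically rather than by error bounds.
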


\begin{proof}
The proof is given in Appendix \ref{InvQ-proof}.
\end{proof}

\begin{corollary} \label{yFinalApp}
For $x \gg 1$, $y = Q^{-1}(x,1-\varepsilon)$ is approximately obtained as
\begin{equation}
y \cong x-Q^{-1}(\varepsilon).
\end{equation}
\end{corollary}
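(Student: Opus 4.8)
The plan is to start from the second branch of Lemma \ref{InvQ}, namely
\begin{equation*}
y = x+\frac{1}{2Q^{-1}(\varepsilon)}\ln\left[\frac{x}{x-Q^{-1}(\varepsilon)}\right]-Q^{-1}(\varepsilon),
\end{equation*}
which is the relevant sub-function once $x$ is large (since $x_0$ is a fixed constant, $x\gg 1$ certainly puts us in the regime $x>x_0$), and to show that the logarithmic correction term vanishes in the limit $x\to\infty$. Concretely, I would write $\ln[x/(x-Q^{-1}(\varepsilon))] = -\ln(1-Q^{-1}(\varepsilon)/x)$ and expand for large $x$, giving $\ln[x/(x-Q^{-1}(\varepsilon))] = Q^{-1}(\varepsilon)/x + O(1/x^2)$. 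Hence the middle term behaves like $\frac{1}{2Q^{-1}(\varepsilon)}\cdot\frac{Q^{-1}(\varepsilon)}{x} = \frac{1}{2x}$, which tends to $0$. Dropping this $O(1/x)$ term leaves exactly $y \cong x - Q^{-1}(\varepsilon)$, as claimed.

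Next I would handle the degenerate case $Q^{-1}(\varepsilon)=0$ (i.e.\ $\varepsilon = 1/2$) separately, using the third branch of Lemma \ref{InvQ}, $y = x + \frac{1}{2x}$; here the correction term $\frac{1}{2x}$ again vanishes as $x\to\infty$, so $y\cong x = x - Q^{-1}(\varepsilon)$ since $Q^{-1}(\varepsilon)=0$. This shows the corollary statement holds uniformly in $\varepsilon$, and that the two non-trivial branches of Lemma \ref{InvQ} collapse to the same simple asymptote. It is worth remarking that this is consistent with the known tail behaviour of the Marcum $Q$--function: for large $x$, $Q(x,y)$ transitions sharply from $1$ to $0$ near $y\approx x$, and the offset of that transition for a target level $1-\varepsilon$ is governed by a Gaussian, which is precisely what produces the $-Q^{-1}(\varepsilon)$ shift.

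The only real subtlety is a bookkeeping one rather than a deep obstacle: one must check that for $x\gg 1$ we are genuinely on the second (or third) branch and not the first, i.e.\ that $x>x_0$. Since Lemma \ref{InvQ} defines $x_0$ as the (finite) crossing point of the sub-functions for $x>\max[0,Q^{-1}(\varepsilon)]$, and the first branch grows like $e^{x^2/4}$ while the others grow only linearly, the crossing happens at some bounded $x_0$ and the condition $x\gg 1$ is more than enough. Beyond that, the argument is just a first-order Taylor expansion of the logarithm, so I do not anticipate any genuine difficulty.
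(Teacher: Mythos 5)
Your proposal is correct and follows essentially the same route as the paper: both start from the second (and, for $Q^{-1}(\varepsilon)=0$, third) branch of Lemma \ref{InvQ} and drop the correction term for $x\gg 1$. Your expansion $\ln\bigl[x/(x-Q^{-1}(\varepsilon))\bigr]\cong Q^{-1}(\varepsilon)/x$, showing the dropped term is $O(1/x)$ in absolute size rather than merely small compared to $x$, is in fact a slightly sharper justification than the paper's, since the retained term $-Q^{-1}(\varepsilon)$ is only a constant.
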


\begin{proof}
For $x \gg 1$, which results in $x > x_0$, one sees that $x \gg \frac{1}{2Q^{-1}(\varepsilon)}\ln\left[\frac{x}{x-Q^{-1}(\varepsilon)}\right]$ and $x \gg \frac{1}{2x}$, and hence the desired result is obtained. 
\end{proof}

Now using Corollary \ref{yFinalApp} the optimum elevation angle $\tCoptdc$ that yields the optimum altitude $\hOPTdc$ and the maximum coverage radius $\rCdc$ in \eqref{h,rC} is obtained through the following theorem. 

\begin{theorem} \label{OptEleAng theorem}
The approximate optimum elevation angle of the UAV $\tCoptdc$ at which the UAV has the maximum coverage radius $\rCdc$ can be find in the following implicit equation
\begin{equation} \label{OptEleAng}
\a(\tC)\tan(\tC)+\a'(\tC)\ln[\Lambda(\tC)] = 2\xc'\left(\frac{Q^{-1}(\varepsilon)}{\xc[\xc-Q^{-1}(\varepsilon)]}\right),
\end{equation}
where $\Lambda(\tC)$ and $\xc$ are defined in \eqref{h,rC} and $\a'(\tC)$ and $\xc'$ indicate the derivative functions with respect to $\tC$.
\end{theorem}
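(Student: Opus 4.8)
The plan is to treat the coverage radius along the configuration space as a function of the single parameter $\tC$ and to locate its maximum by a first--order stationarity condition. By the previous theorem, every point of $\Sdc$ is of the form $\rdc(\tC) = \Lambda(\tC)\cos(\tC)$, so I would differentiate with respect to $\tC$ and impose $\frac{d\rdc}{d\tC}=0$, i.e. $\Lambda'(\tC)\cos(\tC) = \Lambda(\tC)\sin(\tC)$, equivalently
\[
\frac{d}{d\tC}\ln\Lambda(\tC) = \tan(\tC).
\]
Before exploiting this I would record why the relevant stationary point is the maximizer: at $\tC=\pi/2$ one has $\rdc=0$, at $\tC=0$ one has $\rdc=\Lambda(0)>0$ finite, and $\rdc(\tC)$ is continuous on $[0,\pi/2]$, so the maximum is attained at some $\tCoptdc\in[0,\pi/2)$; under the monotonicity assumptions on $K(\tC)$ and $\a(\tC)$ the curve $\Sdc$ is unimodal (as in Figure \ref{Sdc}), so this maximizer is an interior critical point of the displayed equation rather than an endpoint.

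Next I would expand $\ln\Lambda$ using \eqref{Lambda}: writing $g(\tC) \triangleq \ln(\gamma_\text{U}/\xi) + 2\ln\yc - \ln(2+\xc^2)$ with $\xc=\sqrt{2K(\tC)}$ and $\yc=Q^{-1}(\xc,1-\varepsilon)$, one has $\ln\Lambda = g/\a$. Differentiating by the quotient rule and using $g = \a\ln\Lambda$ to re-express the correction term gives $\frac{d}{d\tC}\ln\Lambda = \frac{g'}{\a} - \frac{\a'}{\a}\ln\Lambda$, so the stationarity condition, after multiplying by $\a(\tC)$, becomes
\[
\a(\tC)\tan(\tC) + \a'(\tC)\ln\Lambda(\tC) = g'(\tC) = 2\,\frac{\yc'}{\yc} - \frac{2\xc\xc'}{2+\xc^2}.
\]
This already has the left-hand side of \eqref{OptEleAng}; it remains to simplify the right-hand side.

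For the final simplification I would invoke Corollary \ref{yFinalApp}, valid in the regime $\xc\gg1$: there $\yc \cong \xc - Q^{-1}(\varepsilon)$ and hence $\yc' \cong \xc'$, so $\yc'/\yc \cong \xc'/[\xc - Q^{-1}(\varepsilon)]$; consistently in the same regime $2+\xc^2\cong\xc^2$, so $2\xc\xc'/(2+\xc^2)\cong 2\xc'/\xc$. Subtracting and putting over a common denominator,
\[
g'(\tC) \cong 2\xc'\left[\frac{1}{\xc-Q^{-1}(\varepsilon)} - \frac{1}{\xc}\right] = 2\xc'\left(\frac{Q^{-1}(\varepsilon)}{\xc\,[\xc-Q^{-1}(\varepsilon)]}\right),
\]
which yields exactly the implicit equation \eqref{OptEleAng} for $\tCoptdc$, and then $\hOPTdc = \Lambda(\tCoptdc)\sin(\tCoptdc)$ and $\rCdc = \Lambda(\tCoptdc)\cos(\tCoptdc)$ follow from \eqref{h,rC}.

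I expect the calculus to be routine; the real work is the justification of the approximations. One must argue that the Rician factors encountered near the optimal angle satisfy $\xc\gg1$ (equivalently that $K(\tC)$ is large there), which is what makes both Corollary \ref{yFinalApp} and the replacement $2+\xc^2\approx\xc^2$ accurate, and one must check that differentiating the approximation $\yc\approx\xc-Q^{-1}(\varepsilon)$ term-by-term is legitimate (i.e. that the error term in the Corollary has small derivative). A secondary but necessary point is confirming the stationary point is a genuine maximum rather than a minimum or inflection, which I would do by a sign analysis of $d\rdc/d\tC$ or by appealing to the established shape of $\Sdc$.
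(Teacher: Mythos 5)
Your proposal is correct and follows essentially the same route as the paper's proof: setting $\frac{d}{d\tC}\ln\rdc = 0$ to get $\tan(\tC)=\frac{d}{d\tC}\ln\Lambda(\tC)$, then applying the large-$\xc$ approximations $2+\xc^2\cong\xc^2$ and $\yc\cong\xc-Q^{-1}(\varepsilon)$ from Corollary \ref{yFinalApp} and simplifying via the quotient rule to arrive at \eqref{OptEleAng}. The only differences are cosmetic (you differentiate exactly and approximate afterward, the paper approximates $\Lambda$ first), plus your added remarks on endpoint behavior and justifying the approximations, which the paper leaves implicit.
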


\begin{proof}
The proof can be found in Appendix \ref{OptEleAng theorem proof}.
\end{proof}

Notice that, by using the optimum elevation angle obtained from \eqref{OptEleAng} into the system equations in \eqref{h} and \eqref{rC}, the optimum altitude of the UAV $\hOPTdc$ and its maximum coverage radius $\rCdc$ can be obtained as
\begin{subequations}
\begin{align}
\hOPTdc &= \Lambda(\tCoptdc) \cdot \sin(\tCoptdc), \\
\rCdc &= \Lambda(\tCoptdc) \cdot \cos(\tCoptdc).
\end{align}
\end{subequations}
Moreover, \eqref{OptEleAng} shows that $\tCoptdc$ is independent from the transmit power $\PU$ and the SNR threshold $\xi$ provided that $\a'(\tCopt) \cong 0$ (c.f. Section \ref{numerical}). In other words, $\tCoptdc$ and $\hOPTdc/\rCdc = \tan(\tCoptdc)$ are only determined by $\varepsilon$ and the propagation parameters $K(\cdot)$ and $\a(\cdot)$ that are characterized by the type of environment and the system parameters such as carrier frequency. Therefore, using \eqref{h,rC} one finds that
\begin{equation} \label{prop}
\hOPTdc \propto \left(\frac{\PU}{\xi}\right)^{\frac{1}{\a(\tCoptdc)}},~~~\rCdc \propto \left(\frac{\PU}{\xi}\right)^{\frac{1}{\a(\tCoptdc)}}.
\end{equation} 


\section{Air-to-Ground Communication Using Ground Relaying} \label{relaying communication}

In this section a cooperative strategy with ground relaying is presented. The outage probability of the system is studied and an analytical lower bound is derived.


\subsection{Ground Decode-and-Forward Relaying}

We adopt the decode-and-forward (DF) opportunistic relaying method, where the data is transmitted to the destination D in two phases. In the first phase, the UAV broadcasts its data and provided that the SNR of the link between the UAV and an arbitrary relay node $\text{R}_j$ is high enough, the relay is able to successfully decode the received signal. These relay nodes form a set called $\A$ which may differ in each transmission attempt since the wireless channel between the UAV and ground nodes vary. In the second phase the best relay node $\text{R}_J$ in $\A$ which has the highest instantaneous SNR to the destination D is chosen to retransmit the received data to the destination. Therefore, we can write
\begin{equation}\label{relay startegy}
\A = \{\text{R}_j~|~\Gamma_{\text{U}{\text{R}_j}}>\xi\}~,~~~J \triangleq \arg\,\max\limits_{\text{R}_j \in \A}~{\Gamma _{\text{R}_j\text{D}}}~,
\end{equation}
where $\Gamma_{\text{U}{\text{R}_j}}$ and $\Gamma _{\text{R}_j\text{D}}$ are obtained from \eqref{GUX} and \eqref{G_G1G2} respectively. The outage probability of the relaying communication (rc) can be defined as
\begin{equation}\label{out-def-R}
\Poutrc = \mathds{P}\left(\Gamma _{\text{R}_J\text{D}}\leq\xi\right),
\end{equation}
which is obtained in the following theorem.

\begin{theorem} \label{theorem - relaying outage}
The outage probability of the relaying communication $\Poutrc$ can be written as
\begin{subequations} \label{Relaying Outage}
\begin{equation}
\Poutrc(\rD,h) = e^{-\lambda[\psi_1(h)-\psi_2(\rD,h)]}
\end{equation}
where
\begin{equation}
\psi_1(h) = 2\pi \int_0^{\rrc} \rR~Q\left(\sqrt{2K(\tR)},\sqrt{{2[K(\tR)+1]\,\xi {\lUR^{\,\a(\tR)}}}/{\gamma_\text{U}}}\right)\,d\rR,
\end{equation}
\begin{align}
\psi_2(\rD,h) = \int_0^{2\pi}\!\!\!\int_0^{\rrc} &\rR\left[1-Q\left(\sqrt{2\kappa_0},\sqrt{{2(\kappa_0+1)\,\xi
\lRD^{\,\a_0}}/{\gamma_\text{R}}}\right)\right] \\ 
&\times  Q\left(\sqrt{2K(\tR)},\sqrt{{2[K(\tR)+1]\,\xi
\lUR^{\,\a(\tR)}}/{\gamma_\text{U}}}\right) d\rR \, d\varphi_\text{R},
\end{align}
\begin{equation}
\lUR = \sqrt{r_\text{R}^2+h^2},~~ \lRD = \sqrt{r_\text{R}^2+r_\text{D}^2-2r^{}_\text{R}r^{}_\text{D}\cos(\varphi_\text{R}-\varphi_\text{D})} ,
\end{equation}
\begin{equation}
\tR = \tan^{-1}(h/\rR),~~\gamma_\text{R} = \frac{A P_\text{R}}{N_0}.
\end{equation} 
and $\rrc$ indicates the radius of the coverage area $\mathcal{C}$ for the relaying communication.
\end{subequations} 
\end{theorem}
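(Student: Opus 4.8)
The plan is to re-express the outage event as the non-existence of a ``useful'' relay, then to invoke the independent-thinning and void-probability properties of the PPP, and finally to perform a short algebraic split of the resulting integral into the two pieces $\psi_1$ and $\psi_2$.

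First I would characterize the outage event. Call a relay $\text{R}_j$ located at $(\rR,\varphi_\text{R})$ \emph{useful} if both $\Gamma_{\text{U}\text{R}_j}>\xi$ (so that $\text{R}_j\in\A$) and $\Gamma_{\text{R}_j\text{D}}>\xi$. If at least one useful relay exists, the selected relay $\text{R}_J$ obeys $\Gamma_{\text{R}_J\text{D}}\ge\Gamma_{\text{R}_j\text{D}}>\xi$, so the link is not in outage; conversely, if no useful relay exists, then either $\A=\emptyset$ or every $\text{R}_j\in\A$ has $\Gamma_{\text{R}_j\text{D}}\le\xi$, and in both cases $\Gamma_{\text{R}_J\text{D}}\le\xi$, i.e.\ an outage. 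Hence $\Poutrc=\mathds{P}(\text{no useful relay in }\C)$.

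Next I would evaluate the probability that a relay at $(\rR,\varphi_\text{R})$ is useful. Conditioned on that position, the A2G hop has elevation angle $\tR=\tan^{-1}(h/\rR)$ and length $\lUR$, while the G2G hop to D has length $\lRD$, Rician factor $\kappa_0$ and path loss exponent $\a_0$; since the fadings of these two links are independent, the probability factorizes. Applying \eqref{Pout=1-Q} to the A2G hop and its G2G analogue derived from \eqref{G_G1G2} (with Rician factor $\kappa_0$, path loss exponent $\a_0$ and link budget $\gamma_\text{R}$) gives the per-relay success probability
\[
p(\rR,\varphi_\text{R})=Q\!\left(\sqrt{2K(\tR)},\sqrt{2[K(\tR)+1]\,\xi\,\lUR^{\,\a(\tR)}/\gamma_\text{U}}\,\right) Q\!\left(\sqrt{2\kappa_0},\sqrt{2(\kappa_0+1)\,\xi\,\lRD^{\,\a_0}/\gamma_\text{R}}\,\right).
\]
Because the relay positions form a PPP of intensity $\lambda$ and, given the positions, the ``useful'' indicators are mutually independent (the fading variables over distinct node pairs being independent), the independent-marking/thinning theorem shows that the useful relays constitute an inhomogeneous PPP on $\C$ with intensity $\lambda\,p(\rR,\varphi_\text{R})$. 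The void probability of this process then yields
\[
\Poutrc(\rD,h)=\exp\!\left(-\lambda\int_0^{2\pi}\!\!\int_0^{\rrc} p(\rR,\varphi_\text{R})\,\rR\,d\rR\,d\varphi_\text{R}\right).
\]

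Finally I would split the exponent. Writing the second Marcum-$Q$ factor of $p$ as $1-\bigl[1-Q(\sqrt{2\kappa_0},\cdot)\bigr]$ turns the double integral into a difference of two terms: in the first, the integrand no longer depends on $\varphi_\text{R}$, so the angular integration contributes a factor $2\pi$ and reproduces $\psi_1(h)$; the second term is precisely $\psi_2(\rD,h)$. This gives $\Poutrc=e^{-\lambda[\psi_1(h)-\psi_2(\rD,h)]}$. I expect the step requiring the most care to be the justification of the thinning: one must verify that the ``useful'' marking is independent across relays --- which rests on the mutual independence of the fading over distinct links --- so that the thinned process is again Poisson and the void-probability formula is legitimate; the empty-$\A$ case must also be absorbed into the ``no useful relay'' event, as noted above. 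The remaining manipulations are routine.
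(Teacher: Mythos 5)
Your proof is correct, but it follows a genuinely different route from the paper. The paper conditions on the cardinality of the decoding set $\A$: it first applies the Marking Theorem to show $\A$ is a PPP with thinned intensity $\lambda\,Q(\cdot)$ and Poisson-distributed size $|\A|$ with mean $\mu_{\A}(h)=\lambda\psi_1(h)$, then exploits the fact that, conditioned on $|\A|=i$, the relay locations are i.i.d.\ with density $\lambda_{\A}/\mu_{\A}$, so the second-hop failure probabilities factor as $\mathds{P}(\Gamma_{\text{RD}}\leq\xi)^i$, and finally it resums the Poisson series via the exponential Taylor expansion to get $e^{-\mu_{\A}[1-\mathds{P}(\Gamma_{\text{RD}}\leq\xi)]}$. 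You instead thin the original PPP only once, by the joint event that \emph{both} hops of a given relay exceed the threshold (your ``useful'' relays), and read off the outage probability as the void probability of the resulting inhomogeneous PPP with intensity $\lambda\,p(\rR,\fiR)$; the split $Q_1Q_2=Q_1-Q_1(1-Q_2)$ then reproduces $\psi_1-\psi_2$ exactly, so the exponents agree. Your argument is more direct --- no conditioning on $|\A|$, no series summation --- and it makes transparent why the answer is a void probability; the justification you flag (independence of the per-relay marks, which follows from the assumed independence of fading across distinct node pairs, plus absorbing the $\A=\emptyset$ case into ``no useful relay'') is indeed the only point requiring care, and you handle it correctly. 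What the paper's longer route buys is that its intermediate objects, namely $\mu_{\A}(h)$ and the typical-relay failure probability $\mathds{P}(\Gamma_{\text{RD}}\leq\xi)$ in \eqref{Gr1d<xi}, are reused verbatim in the proof of the lower-bound corollary, whereas with your derivation those quantities would have to be introduced separately there.
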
 

\begin{proof}
The proof is analogous to \cite{wang2011general} and is given in Appendix \ref{theorem - relaying outage proof}.
\end{proof}

The relation in \eqref{Relaying Outage} shows that the relaying outage probability $\Poutrc(\rD,h)$ exponentially decreases with the density of the relays $\lambda$. In the following corollary a lower bound for $\Poutrc(\rD,h)$ simplifies the corresponding expression.

\begin{corollary}
The outage probability of the relaying communication in \eqref{Relaying Outage} is lower bounded as
\begin{equation}
\Poutrc(\rD,h) \geqslant e^{-\lambda[\psi_{01}-\psi_{02}(\rD)]},
\end{equation}
where $\psi_{01} = |\C|$ is the area of $\C$, and
\begin{equation}
\psi_{02}(\rD) = \int_0^{2\pi}\!\!\!\int_0^{\rrc} \rR \left[1-Q\left(\sqrt{2\kappa_0},\sqrt{{2(\kappa_0+1)~\xi
\lRD^{\,\a_0}}/{\gamma_\text{R}}}\right)\right] d\rR\,d\fiR.
\end{equation}
\end{corollary}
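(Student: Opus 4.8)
The plan is to start from the closed-form expression for $\Poutrc(\rD,h)$ established in Theorem \ref{theorem - relaying outage}, namely $\Poutrc(\rD,h) = e^{-\lambda[\psi_1(h)-\psi_2(\rD,h)]}$. Since $t \mapsto e^{-\lambda t}$ is strictly decreasing for $\lambda>0$, the claimed bound $\Poutrc(\rD,h) \geqslant e^{-\lambda[\psi_{01}-\psi_{02}(\rD)]}$ is equivalent to the inequality $\psi_1(h)-\psi_2(\rD,h) \leqslant \psi_{01}-\psi_{02}(\rD)$, so it suffices to establish the latter.

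First I would note that the Marcum Q--term in $\psi_1$, which involves $K(\tR)$ and $\lUR=\sqrt{r_\text{R}^2+h^2}$, does not depend on the angular coordinate $\fiR$; hence $\psi_1(h)$ can be written as the double integral $\int_0^{2\pi}\!\!\int_0^{\rrc}\rR\,Q(\sqrt{2K(\tR)},\cdot)\,d\rR\,d\fiR$ over the same domain as $\psi_2$. Subtracting, the two integrals merge into one, and the bracketed $[1-Q(\sqrt{2\kappa_0},\cdot)]$ factor of $\psi_2$ telescopes against the constant $1$, leaving $\psi_1(h)-\psi_2(\rD,h) = \int_0^{2\pi}\!\!\int_0^{\rrc}\rR\,Q(\sqrt{2K(\tR)},\cdot)\,Q(\sqrt{2\kappa_0},\cdot)\,d\rR\,d\fiR$.

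Next I would bound the A2G factor $Q(\sqrt{2K(\tR)},\cdot)\leqslant 1$, which is legitimate because the first-order Marcum Q--function is a complementary CDF (so it lies in $[0,1]$) and the remaining factor $\rR\,Q(\sqrt{2\kappa_0},\cdot)$ is nonnegative. This gives $\psi_1(h)-\psi_2(\rD,h)\leqslant \int_0^{2\pi}\!\!\int_0^{\rrc}\rR\,Q\big(\sqrt{2\kappa_0},\sqrt{2(\kappa_0+1)\,\xi\,\lRD^{\,\a_0}/\gamma_\text{R}}\big)\,d\rR\,d\fiR$. Finally, since $\psi_{01}=|\C|=\pi(\rrc)^2=\int_0^{2\pi}\!\!\int_0^{\rrc}\rR\,d\rR\,d\fiR$, the difference $\psi_{01}-\psi_{02}(\rD)$ equals $\int_0^{2\pi}\!\!\int_0^{\rrc}\rR\big[1-(1-Q(\sqrt{2\kappa_0},\cdot))\big]\,d\rR\,d\fiR=\int_0^{2\pi}\!\!\int_0^{\rrc}\rR\,Q(\sqrt{2\kappa_0},\cdot)\,d\rR\,d\fiR$, i.e.\ it coincides exactly with the upper bound just obtained. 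Chaining the two relations yields $\psi_1(h)-\psi_2(\rD,h)\leqslant\psi_{01}-\psi_{02}(\rD)$, hence the corollary.

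I do not anticipate a real obstacle here: the only points requiring care are the bookkeeping that collapses $\psi_1-\psi_2$ into a single integral (using that $Q(\sqrt{2K(\tR)},\cdot)$ is $\fiR$--independent and that the radial cutoff $\rrc$ is common to all four quantities) and the sign check that replacing $Q(\sqrt{2K(\tR)},\cdot)$ by $1$ can only enlarge the integral. The resulting bound is tight exactly when $Q(\sqrt{2K(\tR)},\cdot)\approx 1$ over the coverage disk, i.e.\ when the UAV--relay hops are almost surely not in outage, which is precisely why it is described as a \emph{tight} lower bound.
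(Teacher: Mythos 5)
Your proof is correct, but it takes a different route from the paper. You work entirely with the closed-form result of Theorem \ref{theorem - relaying outage}: writing $\psi_1(h)$ as a double integral over the same domain as $\psi_2(\rD,h)$, collapsing $\psi_1-\psi_2$ into $\int_0^{2\pi}\!\!\int_0^{\rrc}\rR\,Q\big(\sqrt{2K(\tR)},\cdot\big)\,Q\big(\sqrt{2\kappa_0},\cdot\big)\,d\rR\,d\fiR$, bounding the UAV--relay factor by $1$, and using the monotonicity of $t\mapsto e^{-\lambda t}$ together with $\psi_{01}-\psi_{02}(\rD)=\int_0^{2\pi}\!\!\int_0^{\rrc}\rR\,Q\big(\sqrt{2\kappa_0},\cdot\big)\,d\rR\,d\fiR$; all of these steps check out. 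The paper instead argues probabilistically: it assumes every relay in $\C$ decodes the UAV's signal in the first phase, observes that this can only reduce the outage (a coupling/stochastic-dominance step stated informally), which amounts to setting the thinned density $\lambda_{\A}=\lambda$ and $\mu_{\A}=\lambda|\C|$, and then re-runs the computation of Theorem \ref{theorem - relaying outage} to get $\mathcal{P}_\text{out}^{\text{LB}}=e^{-\lambda[\psi_{01}-\psi_{02}(\rD)]}$. The two arguments rest on the same underlying fact --- the first-hop success probability (the Marcum Q--term) is at most $1$ --- but yours is a purely analytic verification that is self-contained once Theorem \ref{theorem - relaying outage} is granted and makes the direction of the inequality completely explicit, whereas the paper's derivation is more conceptual, bypasses the integral bookkeeping, and leads naturally to the physical tightness interpretation (the bound is met when all relays near D decode), which you also correctly recover at the end.
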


\begin{proof}
Assume that in the first phase all the relay nodes over $\C$ are able to decode the received signal from the UAV. In this case by denoting the outage probability as $\mathcal{P}_\text{out}^{\text{LB}}$ we have $\Poutrc \geqslant \mathcal{P}_\text{out}^{\text{LB}}$. To compute $\mathcal{P}_\text{out}^{\text{LB}}$ we notice that the aforementioned assumption leads to $\lambda_{\A} = \lambda$ and hence
\begin{equation} \label{mu-sim}
\mu_{\A} = \int_{\C} \lambda_{\A}\,d\C = \lambda |\C|.
\end{equation} 
In addition \eqref{Gr1d<xi} can be simplified to
\begin{subequations}\label{Gr1d<xi-sim}
\begin{align}
\mathds{P}\left(\Gamma _{\text{R}\text{D}}\leq\xi\right) &= \int_{\C} \mathds{P}\left(\Gamma _{\text{R}\text{D}}\leq\xi~|~\text{R}: (r_\text{R},\varphi_\text{R})\right) \frac{\lambda}{\mu_{\A}} ~d\C \\ \label{Gr1d<xi-sim-1}
&= \frac{1}{|\C|}\int_{\C} \mathds{P}\left(\frac{A P_{\text{R}}}{N_0{\lRD}^{\!\!\a_0}}~\Omega_{{\text{R}}{\text{D}}}\leq\xi~|~\text{R}: (r_\text{R},\varphi_\text{R})\right)~d\C \\ \label{Gr1d<xi-sim-2}
&= \frac{1}{|\C|} \int_0^{2\pi}\!\!\!\int_0^{\rrc} \rR \left[1-Q\left(\sqrt{2\kappa_0},\sqrt{{2(\kappa_0+1)~\xi
\lRD^{\,\a_0}}/{\gamma_\text{R}}}\right)\right]~d\rR\,d\fiR.
\end{align}
\end{subequations}
Therefore, by using \eqref{mu-sim} and \eqref{Pout-final-2} one obtains
\begin{equation}
\mathcal{P}_\text{out}^{\text{LB}} = e^{-\lambda[\psi_{01}-\psi_{02}(\rD)]}.
\end{equation} 
\end{proof}

The proposed lower bound for $\Poutrc$ can be reached at the altitudes of the UAV where it has a good channel condition only with the relay nodes in vicinity of the destination. This is due to the fact that the best relay in the second phase is more likely to be chosen among the candidates located near the destination enduring lower path loss. Therefore, although the mathematical solution for $\mathcal{P}_\text{out}^{\text{LB}}$ is based on the assumption that all the relay nodes successfully decode the UAV's signal in the first phase, considering the second phase of communication and the above-mentioned opportunistic relaying strategy, if only the relay nodes in the proximity of the destination successfully decode the transmitted signal the lower bound provides a tight approximation of the actual outage probability. This fact suggests a range of altitudes at which the UAV can reach the lowest outage probability in relaying strategy. This is further explored in Section \ref{numerical}.   


\subsection{Cooperative Communication}

In an opportunistic relaying cooperative network, the destination D receives the transmitted signal by the UAV from both the direct and the best relay path in the first and second phase respectively. Considering the selection combining strategy in which only the received signal with the highest SNR at the destination is selected, the total outage probability of the cooperative communication $\Poutcc$ can be written as
\begin{align}\label{Tot Outage}
\Poutcc = \mathds{P}\left(\max\{\GUD,\Gamma _{\text{R}_J\text{D}}\} \leq \xi \right) = \Poutdc \cdot \Poutrc,
\end{align}
where the last equation is due to the independency of fading between any pair of nodes. By substituting $\Poutdc$ and $\Poutrc$ from \eqref{Pout=1-Q} and \eqref{Relaying Outage} into \eqref{Tot Outage} the total outage probability is obtained, which is a function of $\rD$ and $h$, i.e. $\Poutcc = \Poutcc(\rD,h)$. Note that in \eqref{Relaying Outage} $\rCrc$ is to be replaced with $\rCcc$ which is the radius of $\C$ for cooperative communication.

%
%

Due to the complicated expressions of $\Poutrc$ and $\Poutcc$ in \eqref{Relaying Outage} and \eqref{Tot Outage}, the problems \eqref{h*Y} and \eqref{rCY} for relaying and cooperative communication are not mathematically tractable. However, numerical optimization for particular scenarios is possible. An example of this is presented in Section \ref{numerical-B}. 


\section{Case Study: \\ Special Dependency of $\a$ and $K$ Over $\theta$} \label{numerical}

In this section we propose specific relations for $\a(\theta)$ and $K(\theta)$ needed for numerical results.


\subsection{Models for $\a(\theta)$ and $K(\theta)$}

The value of path loss exponent is typically proportional to the density of obstacles between the transmitter and receiver such that a larger $\a$ is assumed in denser areas. Therefore, $\a(\theta)$ can be characterized using the notion of probability of line of sight (LoS) $\mathcal{P}_\text{LoS}(\theta)$ between the UAV and the ground node \cite{al2014optimal}. This relationship is defined as
\begin{equation}\label{alfa-Plos}
\a(\theta) = a_1 \cdot \mathcal{P}_\text{LoS}(\theta) + b_1,
\end{equation}
where 
\begin{equation}\label{PrLoS}
\PLoS(\theta) = \frac{1}{1+a_2 e^{-b_2 \theta}},
\end{equation}
and $a_1$, $b_1$, $a_2$ and $b_2$ are determined by the environment characteristics and the transmission frequency, and $\theta$ is in radian. A direct calculation shows that
\begin{align}
a_1 &= \frac{\a_{\frac{\pi}{2}}-\a_0}{\mathcal{P}_\text{LoS}(\frac{\pi}{2})-\mathcal{P}_\text{LoS}(0)} \cong \a_{\frac{\pi}{2}}-\a_0, \nonumber \\
b_1 &= \a_0 - a_1 \cdot \mathcal{P}_\text{LoS}(0) \cong \a_0,
\end{align}
where the approximations are due to the fact that $\mathcal{P}_\text{LoS}(0) \rightarrow 0$ and $\mathcal{P}_\text{LoS}(\frac{\pi}{2}) \rightarrow 1$. The proposed model for $\a(\theta)$ in \eqref{alfa-Plos} is further discussed relying on the recent reports in Appendix \ref{PathLossExp}.

For the Rician factor, in consistency with \cite{shimamoto2006channel}, we follow the exponential dependency between $K$ and $\theta$ as
\begin{equation}
K(\theta) = a_3 \cdot e^{b_3 \theta},
\end{equation}
where $\theta$ is in radian and $a_3$ and $b_3$ are environment and frequency dependent constant parameters which are related to $\kappa_0$ and $\kappa_{\pi/2}$ as
\begin{align}
a_3 &= \kappa_0, \nonumber \\
b_3 &= \frac{2}{\pi} \ln\left(\frac{\kappa_{\frac{\pi}{2}}}{\kappa_0}\right).
\end{align}
Note that the general shape of $\a(\theta)$ and $K(\theta)$ can change over different environments and system parameters that could be specified by the measurements in concrete scenarios. 


\subsection{Simulation and Discussion} \label{numerical-B}

In this section the simulations are provided to discuss the analytical results obtained in the previous sections. 

\subsubsection{With the Same UAV Transmit Power}

The parameters used in this subsection are set to $\gamma_\text{U} = \gamma_\text{R} = 75$ dB, $\kappa_0 = 5$ dB, $\kappa_{\frac{\pi}{2}} = 15$ dB, $\a_0 = 3.5$, $\a_{\frac{\pi}{2}} = 2$, and $\lambda = 0.0003$, unless otherwise indicated. 

Figure \ref{Pout_h_rD} shows that the analytical results of the outage probability are in a good conformity with the simulation results for each of direct, relaying and cooperative communication in which $10^5$ independent network realizations are employed for the simulations. As can be seen, comparing with G2G communication, i.e. $h = 0$, the outage performance is significantly enhanced by exploiting UAV at appropriate altitudes. Moreover, the outage probabilities as function of altitude are convex resulting in the existence of altitudes $\hat{h}^\text{Y} = \hdc, \hrc, \hcc$, or the corresponding elevation angles $\hat{\theta}^{\,\text{Y}}_\text{D} = \tetdc, \tetrc, \tetcc$, where the outage probabilities are minimized. In fact for $h<\hat{h}^\text{Y}$ the benefits of the reduced path loss exponent $\a$ and the increased Rician factor $K$ with increasing $h$ becomes more significant than the losses caused by the increased link length and hence the outage probability decreases. However beyond $\hat{h}^\text{Y}$ the impact of the link length dominates the other factors and hence leads to an increased outage probability. Therefore, at $h = \hat{h}^\text{Y}$ the impact of the above-mentioned factors are balanced resulting in the minimum outage probability. This altitude increases with the distance $\rD$ as can be seen in the figure. 

\begin{figure}[t!]
  \centering
  \includegraphics[width=0.6\columnwidth]{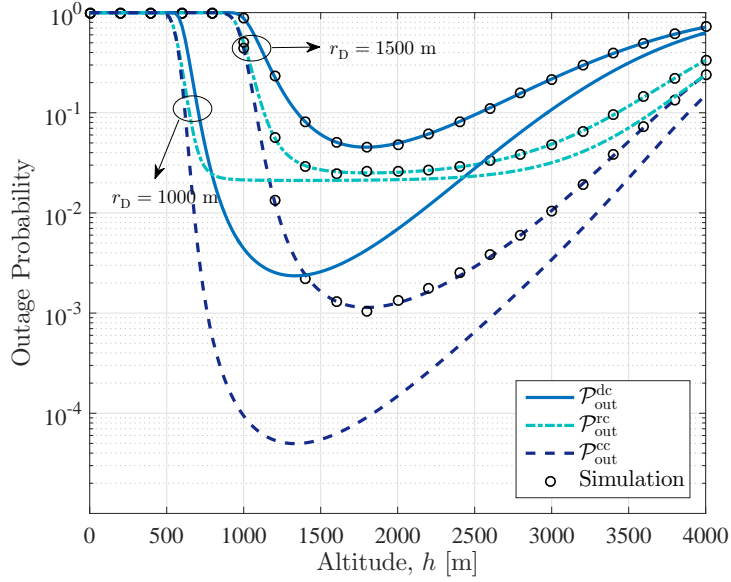}\\
  \caption{Outage probability is a convex function of altitude where optimal altitude is mainly determined by $\Poutdc$. Because of the low transmit power considered in this simulation, the outage 
  probability at low altitudes is very high, yet when optimizing the altitude we can achieve a very good outage performance with the same transmit power.} \label{Pout_h_rD}
\end{figure}

\begin{figure}[t!]
  \centering
  \includegraphics[width=0.6\columnwidth]{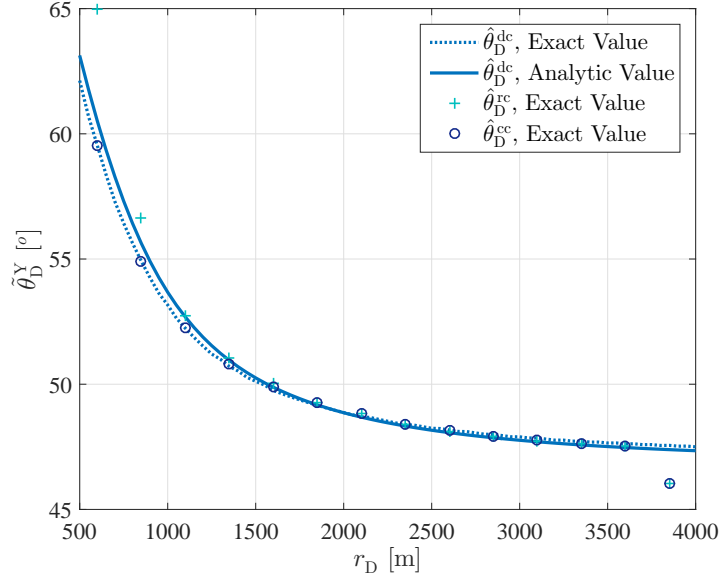}\\
  \caption{The analytical solution for $\tetdc$ closely matches the value obtained by the numerical simulation. Moreover, $\tetcc$ is mainly determined by the direct communication link such that $\tetcc = \tetdc$ except near the border of $\C$.}\label{tet_hat_all_rD}
\end{figure}

Is interesting to note that, according to Figure \ref{Pout_h_rD}, the relaying communication outage probability $\Poutrc$ might be lower or higher than that of the direct communication $\Poutdc$ depending on the altitude $h$ and the destination distance $\rD$. For instance at $\rD = 1000$ m, $\Poutdc$ is bigger than $\Poutrc$ at very low and very high altitudes, however for moderate altitudes the direct communication performs better than the relaying communication. Indeed, the relaying outage probability is limited by the second phase of the corresponding communication strategy where G2G link endures a larger path loss exponent. Due to this fact, $\Poutrc$ remains approximately constant over a range of UAV's altitude at which the A2G link has a good quality and hence the overall relaying outage performance is determined by the G2G link. Therefore, the outage probability of cooperative communication $\Poutcc$, which is the product of $\Poutdc$ and $\Poutrc$, is minimized approximately at the altitude of the UAV where $\Poutdc$ reaches its minimum. In other words $\hdc \cong \hcc$ and hence $\tetdc \cong \tetcc$. This fact is illustrated in Figure \ref{tet_hat_all_rD}.

\begin{figure}[t!]
  \centering
  \includegraphics[width=0.6\columnwidth]{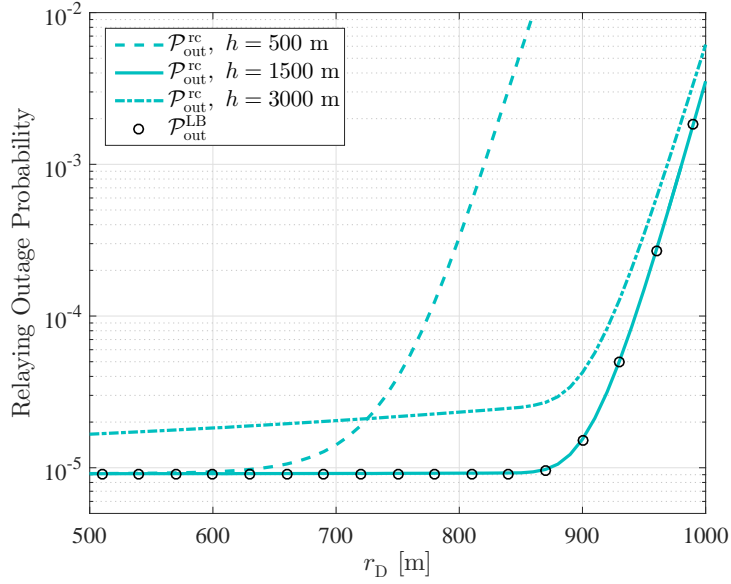}\\
  \caption{The proposed lower bound outage probability $\Pout^\text{LB}$ is very tight over the entire range of $r_\text{D}$ when the UAV is in appropriate altitudes.}\label{Pout_Exact_LB_rD_h}
\end{figure}

The values of $\tetdc$ obtained from Theorem \ref{t-hat} are compared with the exact (numerical) values in Figure \ref{tet_hat_all_rD} which shows the accuracy of the analytical solution proposed in the theorem. As can be seen, $\tetdc$ reduces with $\rD$ since the link length is more susceptible to the elevation angle at larger $\rD$ and hence the beneficial effect of reduction in $\a$ and increase in $K$ becomes less noticeable compared to the additional link length. However $\tetdc$ is approximately independent from $\rD$ at larger distances. According to Figure \ref{tet_hat_all_rD}, although $\tetrc$ may be different with $\tetdc$ at low altitudes, $\tetcc$ is equal to $\tetdc$ which means that the equation \eqref{tD*Eq} is valid for cooperative communication as well. However close to the border of $\C$, both $\tetrc$ and $\tetcc$ deviate from $\tetdc$ owing to the fact that the candidate relays for cooperation are limited to the region between the UAV and the destination D and hence the elevation angle for minimum outage probability decreases.

\begin{figure}[t!]
  \centering
  \includegraphics[width=0.6\columnwidth]{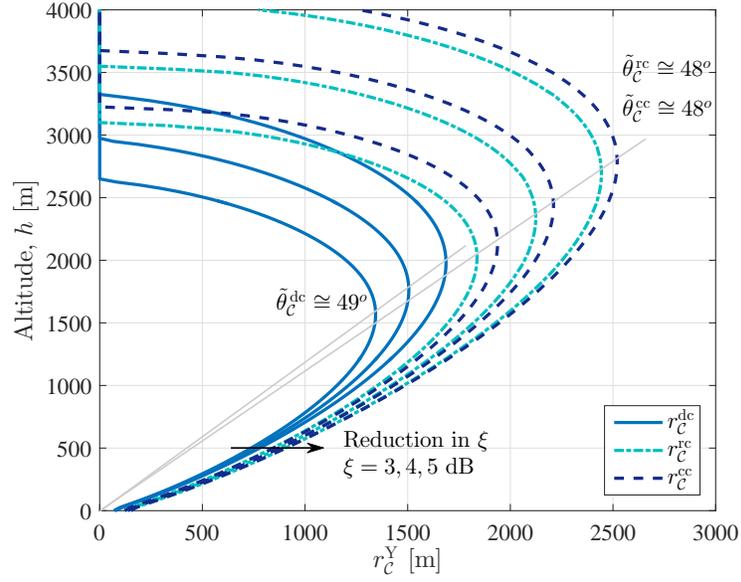}\\
  \caption{Configuration space $\mathcal{S}^\text{Y}$ for different type of communications. The optimum elevation angle $\tilde{\theta}_\mathcal{C}^{\,\text{Y}}$ is independent from the SNR requirement $\xi$.}\label{CovRadius_all_h_xi}
\end{figure}

Figure \ref{Pout_Exact_LB_rD_h} shows that the proposed lower bound for relaying communication is tight over the entire range of $r_\text{D}$ at the optimal altitudes. This is due to the fact that the relay nodes around the neighboring of destination are well connected to the UAV and thus the outage event only occurs in the second phase of relaying communication. The tightness of lower bound outage probability enables us to use it instead of the complex exact expression for $\Poutrc$ obtained in the previous section.

The locus of the configuration space $\mathcal{S}^\text{Y}$ for $\text{Y} = \text{dc}$, $\text{rc}$, $\text{cc}$ are depicted in Figure \ref{CovRadius_all_h_xi}. As can be seen, the impact of third dimension, i.e. altitude, by exploiting UAV is striking in order to extend the coverage range. Furthermore, there exists an optimum altitude $\tilde{h}_\mathcal{C}^\text{Y}$ and elevation angle $\tilde{\theta}_\mathcal{C}^{\,\text{Y}}$ which leads to the maximum coverage radius $\tilde{r}_\mathcal{C}^{\,\text{Y}}$. The figure shows that $\tilde{\theta}_\mathcal{C}^{\,\text{Y}}$ is independent from the SNR requirement $\xi$ and hence the ratio of ${\tilde{h}_\mathcal{C}^\text{Y}}/{\tilde{r}_\mathcal{C}^{\,\text{Y}}}$ is constant. However, $\tilde{h}_\mathcal{C}^\text{Y}$ and $\tilde{\theta}_\mathcal{C}^{\,\text{Y}}$ diminish with $\xi$ as expressed in \eqref{prop}. Figure \ref{CovRadius_all_h_xi} also shows that the relaying communication might finally result in a coverage area larger than that of direct communication, although $\Poutrc$ is higher than $\Poutdc$ at some altitudes $h$ and distances $\rD$ which can be seen in Figure \ref{Pout_h_rD}.

\subsubsection{With the Same Total Transmit Power Budget}

In this subsection we compare the results while adopting the same total transmit power budget at the transmitting nodes, i.e. the UAV and the relay. In other words we assume a total power budget of $P_\text{T}$ and hence we assign $\PU = P_\text{T}$ for direct communication and $\PU+\PR = P_\text{T}$ for relaying and cooperative communication so that the comparisons are fair. However, the fundamental concern here is how the power budget is to be assigned to each of the transmitting nodes in order to increase the coverage range. To this end, we define the power allocation factor $\rho$ as
\begin{equation}
\rho = \frac{\PU}{\PU+\PR},
\end{equation}
representing the portion of power budget allocated to the UAV. Therefore, in the following we discuss the maximum coverage obtained by using the optimum $\rho$ in each altitude. The optimum power allocation factor for relaying and cooperative strategy are denoted as $\rOrc$ and $\rOcc$ respectively. 

\begin{figure}
\centering
\begin{subfigure}{.5\textwidth}
  \centering
  \includegraphics[width=\linewidth]{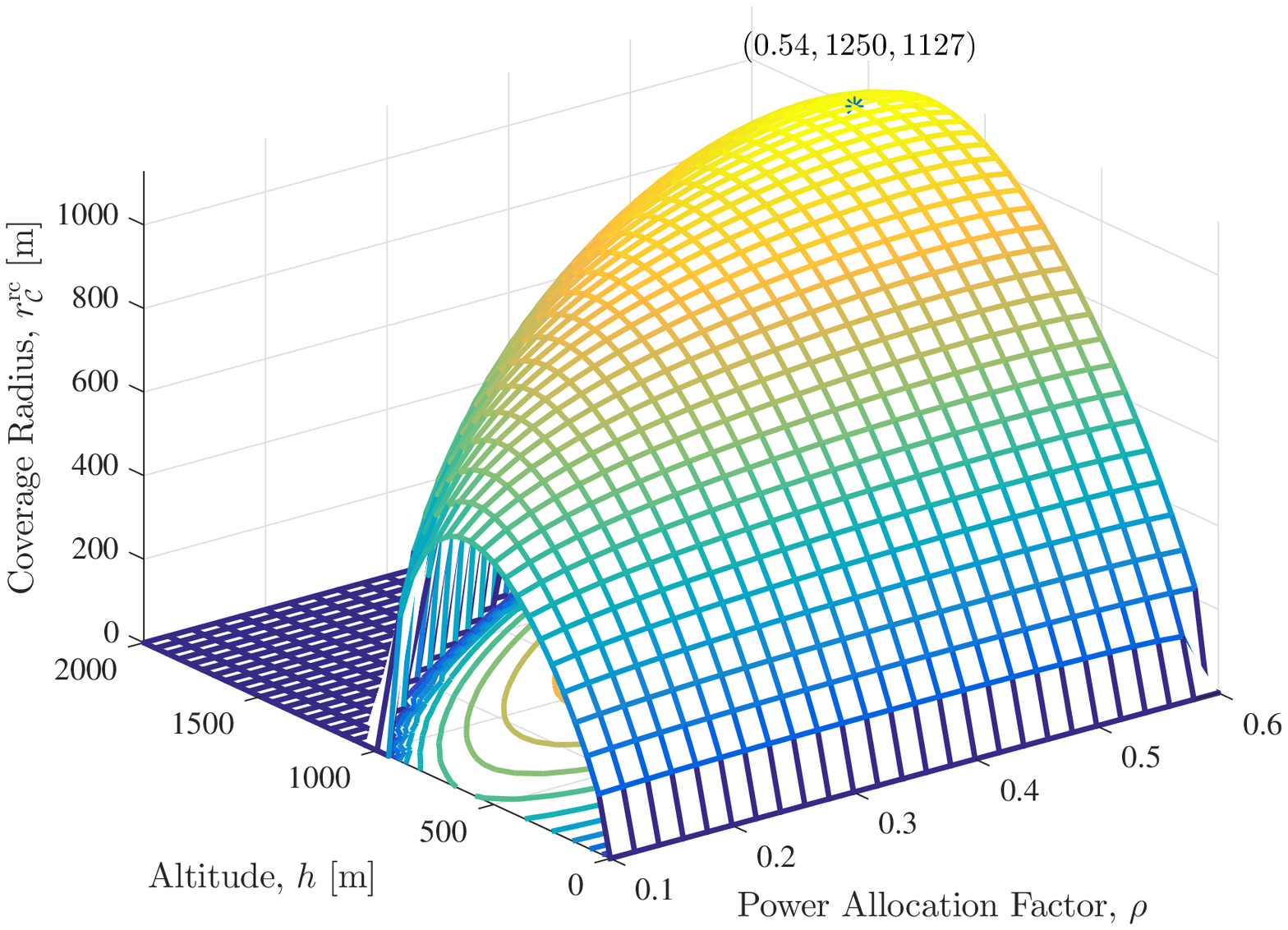}
  \caption{}
  \label{CovRadius_rc_betta_h_3D}
\end{subfigure}%
\begin{subfigure}{.5\textwidth}
  \centering
  \includegraphics[width=\linewidth]{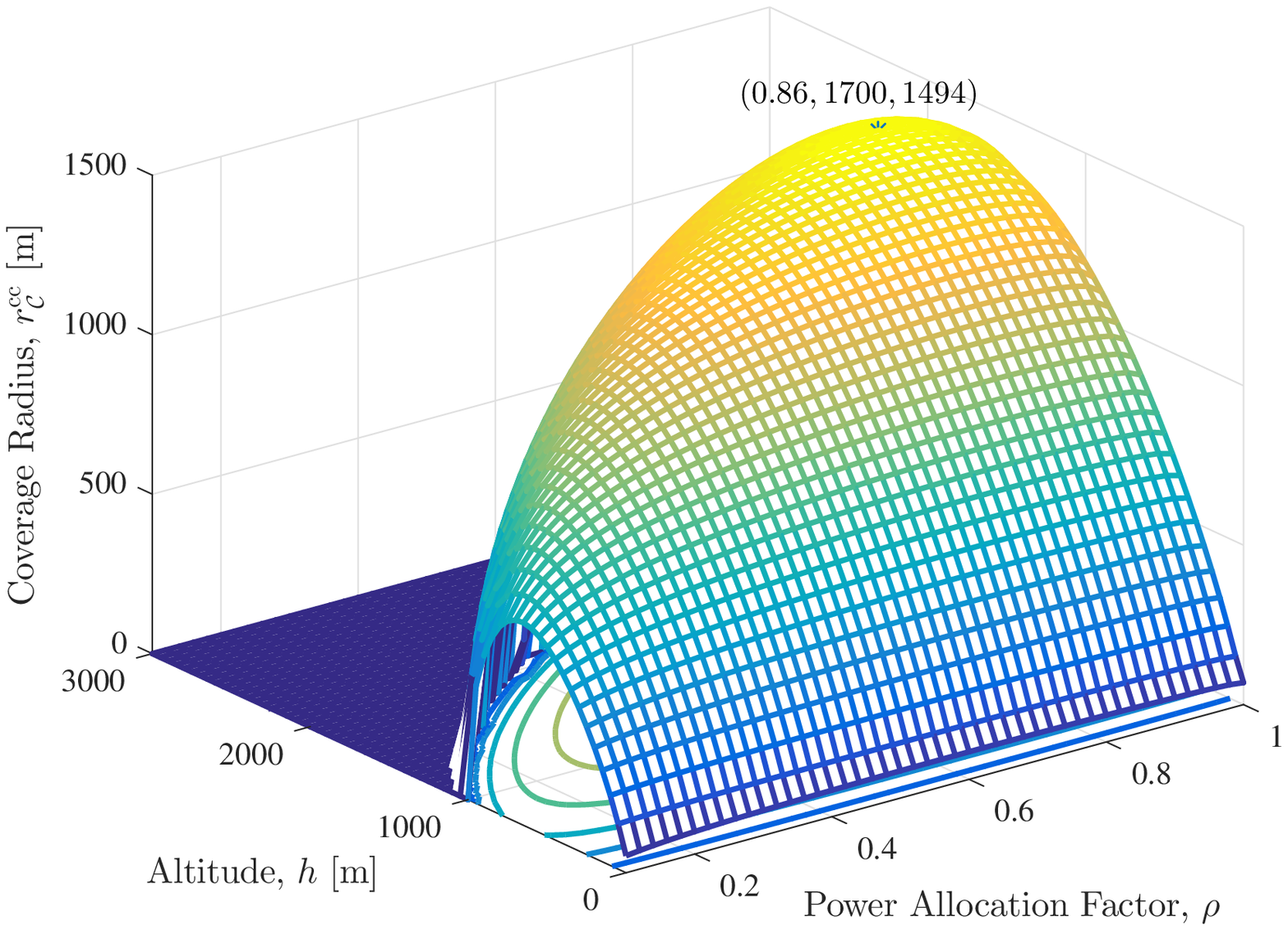}
  \caption{}
  \label{CovRadius_cc_betta_h_3D}
\end{subfigure}
\caption{(a) The coverage radius is maximized at the optimum alitude by allocating the optimum portion of transmission power budget to the UAV. (b) The optimum power allocation factor in cooperative communication is larger than that of the relaying strategy.}
\label{CovRadius_y_betta_h_3D}
\end{figure}

%

Figures \ref{CovRadius_y_betta_h_3D} shows that the coverage radius is a concave function of power allocation factor $\rho$ at each altitude which results in an optimum $\rho$ maximizing the coverage. This means that the performance of the network is maximized when the available power budget is optimally assigned to each of two phases. The same behavior can be observed while looking at coverage radius as function of altitude for a given $\rho$. This fact leads to a unique optimum $h$ and $\rho$ for the maximum coverage radius as is marked in the figures. Generally speaking, at each altitude, allocating more power to the UAV is more effective than to the relays since the communication channel between the UAV and a ground terminal suffers from less path loss than a channel between a relay and a ground destination D, and hence $\rOrc > 0.5$. The optimum power allocation factor for cooperative communication, i.e. $\rOcc$, is even larger than $\rOrc$ since the UAV's signal is also received at the destination which increases the contribution of UAV's transmit power to the final received SNR and hence to the coverage radius.

\begin{figure}
\centering
\begin{subfigure}{.5\textwidth}
  \centering
  \includegraphics[width=\linewidth]{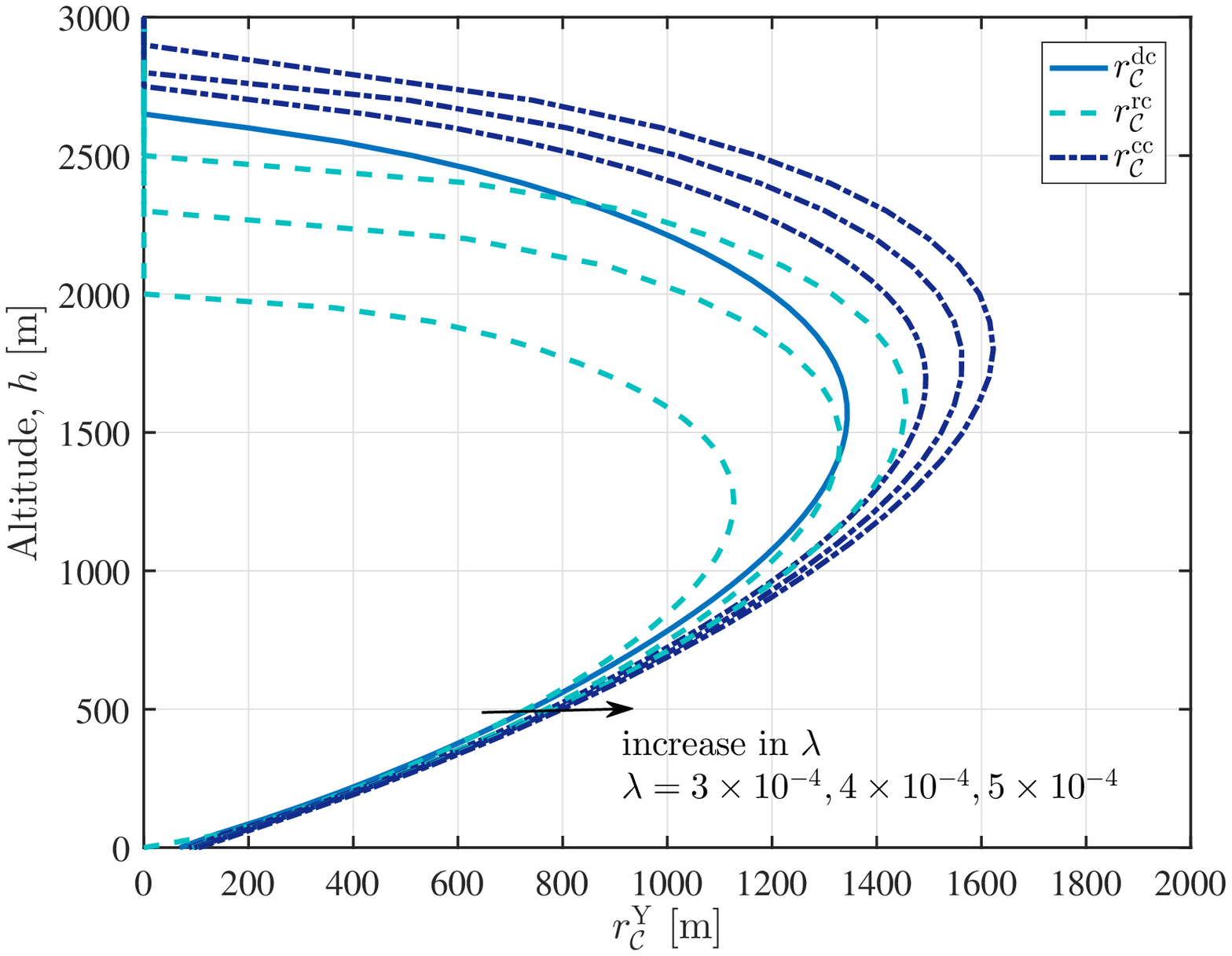}
  \caption{}
  \label{CovRadius_all_h_lambda_powAllo}
\end{subfigure}%
\begin{subfigure}{.5\textwidth}
  \centering
  \includegraphics[width=\linewidth]{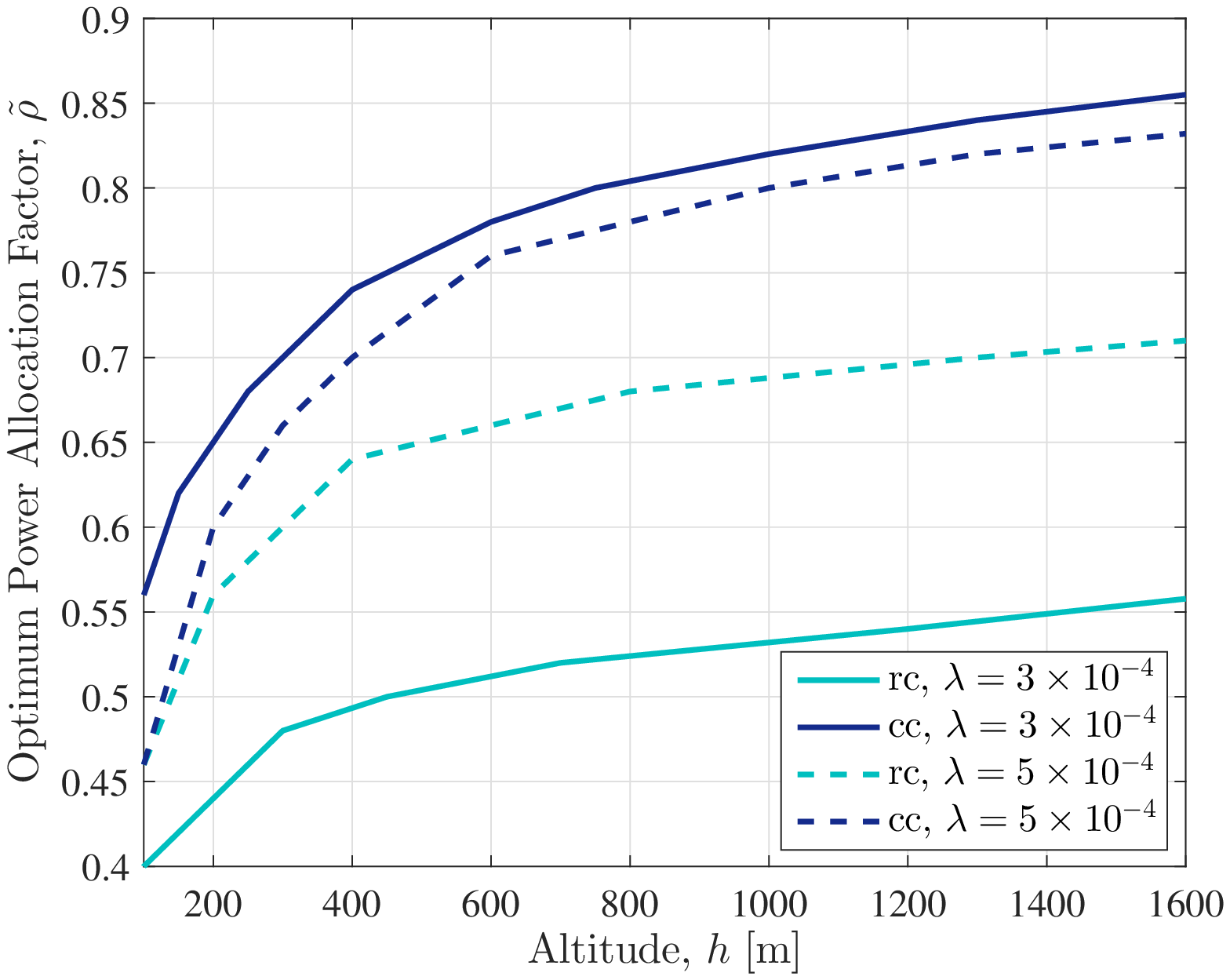}
  \caption{}
  \label{OptPowAllFact_all_h_lambda}
\end{subfigure}
\caption{(a) Even using the same transmission power budget, cooperative communication reults in a larger coverage area. (b) Optimum ratio of power allocation $\tilde{\rho}_\mathcal{C}$ increases with altitude $h$. It is interesting to see that for lower altitude (which is the realistic region given regulatory constraints that limit UAVs to fly too high), the optimal power allocation strategy gives less power to the UAV. This is good, as this means that the power requirement for those low altitude UAVs is lower, however, means also that a ground relay communication strategy is important.}
\label{all_h_lambda}
\end{figure}

%
%

Figure \ref{CovRadius_all_h_lambda_powAllo} shows the coverage radius of the UAV at each altitude while employing the optimum $\rho$. As can be seen, depending on the density of relays, $r_\mathcal{C}^\mathrm{rc}$ could be larger than $r_\mathcal{C}^\mathrm{dc}$ even though the UAV transmit power is lower. Therefore, the relaying strategy might perform better than the direct communication while adopting the same overall transmit power budget. Moreover, cooperative communication always results in a larger coverage area independent from the density of relays $\lambda$. This is due to the fact that the portion of relaying transmit power is determined based on the availability of the relay nodes such that for denser areas more power is dedicated to the relays and hence optimal $\rho$ decreases as $\lambda$ increases (see Figure \ref{OptPowAllFact_all_h_lambda}). However, for relaying communication optimal $\rho$ grows as $\lambda$ becomes larger. As is shown in Figure \ref{OptPowAllFact_all_h_lambda}, the optimum power allocation factors, i.e. $\rOrc$ and $\rOcc$, are an increasing function of altitude $h$. Therefore, as the UAV goes higher more portion of power budget is to be assigned to the UAV for maximum coverage range.



\section{Conclusion} \label{conclusion}

The analysis of UAV networks requires a detailed model of the A2G communication links as function of distance and height. A generic A2G analytical framework was proposed, which takes into account the dependence of the path loss exponent and multipath fading on the height and angle of the UAV. We showed how this model enables a characterization of the performance and reliability of A2G cooperative communication networks.

Moreover, the model enables to derive the UAV altitude that maximizes the reliability and coverage range, which is crucial for UAV usage scenarios such as aerial sensing or data streaming. Results show that the optimal altitude for maximum reliability is mainly determined by the direct link of communication in an A2G cooperative system, as the relaying communication achieves a stable performance for a wide range of UAV altitudes. For a specific scenario with a communication range of 1000$m$ with a low transmit power, the optimal UAV altitude was shown to be 1300$m$ for the direct communication, while a range of 700$m$ to 2000$m$ approximated the optimal UAV altitude for the relaying communication.

Furthermore, by constraining the total transmit power budget, our results give insight in the dimensioning of the system. For example, for a UAV altitude of 200$m$ that complies with regulatory constraints, our analysis shows that an allocation of 35\% lower transmit power to the UAV, it is possible to obtain the same coverage range, thanks to the inclusion of ground relays. Alternatively, if in a similar scenario the UAV height is unconstrained, the coverage range increases up to 25\% compared to the direct communication at the optimum altitudes yet with even 15\% lower UAV transmit power. In fact, this reduction of the UAVs transmit power is of practical importance, first because of the limitation in the source of UAVs power. Second, the UAVs due to the higher LoS probability impose more interference to the ground users compared to the terrestrial interferers and hence their transmit power is to be lowered. This fact will be more investigated in our future study.



\appendices

\section{Proof of Theorem \ref{t-hat}} \label{t-hat proof}

For the elevation angle of the UAV $\tetdc$ at which the link's outage probability $\Poutdc(\rD,h)$ is minimized, one can write
\begin{equation} \label{Pout'=0}
\frac{\partial}{\partial \tD} \Poutdc(\rD,h) = 0.
\end{equation}
By defining the auxiliary variables $x$ and $y$ as
\begin{subequations}\label{x,y-tD}
\begin{align} \label{x-tD}
x &= \sqrt{2 K(\tD)}, \\ \label{y-tD}
y &= \sqrt{2\xi~[1+K(\tD)]~{\lUD}^{\a(\tD)}/\gamma_\text{U}},
\end{align}
\end{subequations}
which are the first and second arguments of the Marcum Q--function in \eqref{Pout=1-Q} respectively, and replacing $\Poutdc(\rD,h)$ from \eqref{Pout=1-Q} into \eqref{Pout'=0} we have
\begin{subequations}
\begin{align}
0 &= \frac{\partial}{\partial \tD} Q(x,y) \\ \label{Qderivation}
  &= \frac{\partial\,Q(x,y)}{\partial x} \cdot \frac{\partial x}{\partial \tD} + \frac{\partial\,Q(x,y)}{\partial y} \cdot \frac{\partial y}{\partial \tD}.
\end{align}
\end{subequations}
From \cite{short2012computation} one sees that
\begin{subequations}\label{Qprop1-both}
\begin{align}\label{Qprop1-1}
\frac{\partial\,Q(x,y)}{\partial x} &= y\,e^{-\frac{x^2+y^2}{2}}\,I_1(xy), \\ \label{Qprop1-2}
\frac{\partial\,Q(x,y)}{\partial y} &= -y\,e^{-\frac{x^2+y^2}{2}}\,I_0(xy).
\end{align}
\end{subequations}
Substituting \eqref{Qprop1-both} into \eqref{Qderivation} yields
\begin{equation}
y\,e^{-\frac{x^2+y^2}{2}}\,I_1(xy) \cdot \frac{\partial x}{\partial \tD} - y\,e^{-\frac{x^2+y^2}{2}}\,I_0(xy) \cdot \frac{\partial y}{\partial \tD} = 0,
\end{equation}
or equivalently
\begin{equation}\label{y'=x'I/I}
\frac{\partial y}{\partial \tD} = \frac{I_1(xy)}{I_0(xy)} \cdot \frac{\partial x}{\partial \tD}.
\end{equation}
Using \eqref{x-tD} one can write
\begin{equation}\label{x'}
\frac{\partial x}{\partial \tD} = \frac{K'(\tD)}{\sqrt{2 K(\tD)}} = \frac{K'(\tD)}{x},
\end{equation}
where $K'(\cdot)$ indicates the derivative function of $K(\cdot)$. In order to compute the derivative of $y$ with respect to $\tD$ in \eqref{y-tD} we notice that $\lUD = \rD/\cos(\tD)$. Thus we can write
\begin{equation}
\ln(y) = \frac{1}{2} \left[\ln\left(\frac{2\xi}{\gamma_U}\right)+\ln[1+K(\tD)]+\a(\tD)\ln\left(\frac{\rD}{\cos(\tD)}\right) \right].
\end{equation}
Taking the derivative with respect to $\tD$ in the above equation yields
\begin{equation}
\frac{\frac{\partial y}{\partial \tD}}{y} = \frac{1}{2} \left[\frac{K'(\tD)}{1+K(\tD)}+\a'(\tD)\ln\left(\frac{\rD}{\cos(\tD)}\right)+\a(\tD)\tan(\tD) \right],
\end{equation}
or equivalently
\begin{equation} \label{y'}
\frac{\partial y}{\partial \tD} = \frac{y}{2} \left[\frac{K'(\tD)}{1+K(\tD)}+\a'(\tD)\ln\left(\frac{\rD}{\cos(\tD)}\right)+\a(\tD)\tan(\tD) \right].
\end{equation}
By using \eqref{y'=x'I/I}, \eqref{x'} and \eqref{y'} one can write
\begin{equation}\label{xy/2}
\frac{xy}{2} \left[\frac{K'(\tD)}{1+K(\tD)}+\a'(\tD)\ln\left(\frac{\rD}{\cos(\tD)}\right)+\a(\tD)\tan(\tD) \right] = \frac{I_1(xy)}{I_0(xy)} K'(\tD).
\end{equation}
Now assuming that $xy$ at $\tD = \tetdc$ is large enough, we use the following approximation \cite{segura2011bounds}
\begin{align}\label{I/I}
\frac{I_1(xy)}{I_0(xy)} = 1-\frac{1}{2xy}-\frac{1}{8(xy)^2}+\mathcal{O}\left[(xy)^{-3}\right] \cong 1.
\end{align}
On the other hand, assuming that $K(\tetdc) \gg 1$, from \eqref{x,y-tD} one obtains
\begin{equation}\label{xy=}
xy \cong  2K(\tD) \sqrt{\frac{\xi}{\gamma_\text{U}}\left[\frac{\rD}{\cos(\tD)}\right]^{\a(\tD)}}.
\end{equation}
Therefore, from \eqref{xy/2}, \eqref{I/I} and \eqref{xy=} and using the assumption of $K(\tetdc) \gg 1$ we finally obtain
\begin{equation}
 \sqrt{\frac{\xi}{\gamma_\text{U}}\left[\frac{\rD}{\cos(\tD)}\right]^{\a(\tD)}} \left[\frac{K'(\tD)}{K(\tD)}+\a'(\tD)\ln\left(\frac{\rD}{\cos(\tD)}\right)+\a(\tD)\tan(\tD) \right] =  \frac{K'(\tD)}{K(\tD)}.
\end{equation}


\section{Proof of Lemma \ref{InvQ}} \label{InvQ-proof}

Rewriting $y = Q^{-1}(x,1-\varepsilon)$ as
\begin{equation}
Q(x,y) = 1-\varepsilon,
\end{equation}
and taking its derivative with respect to $x$ yields
\begin{equation} \label{derQ}
\frac{\partial Q(x,y)}{\partial x} + \frac{\partial Q(x,y)}{\partial y} \frac{\text{d} y}{\text{d} x} = 0.
\end{equation}
By using \eqref{Qprop1-both} in \eqref{derQ} one obtains
\begin{equation}
y\,e^{-\frac{x^2+y^2}{2}}\,I_1(xy) - y\,e^{-\frac{x^2+y^2}{2}}\,I_0(xy) \frac{\text{d} y}{\text{d} x} = 0
\end{equation}
or equivalently
\begin{equation}\label{dy/dx}
\frac{\text{d} y}{\text{d} x} = \frac{I_1(xy)}{I_0(xy)}.
\end{equation}
For small $x$ we have \cite{segura2011bounds}
\begin{equation}
I_n(xy) \cong \left(\frac{xy}{2}\right)^n;~n = 0,1.
\end{equation}
Thus \eqref{dy/dx} can be rewritten as
\begin{equation}
\frac{\text{d} y}{\text{d} x} = \frac{xy}{2},
\end{equation}
which is a first order differential equation with the solution of
\begin{equation}\label{y=y0exp}
y = y_0\,e^{\frac{x^2}{4}},
\end{equation}
where $y_0$ is the value of $y$ at $x = 0$. In order to find $y_0$ we notice that
\begin{equation}\label{Q0}
Q(0,y_0) = 1-\varepsilon,
\end{equation}
and from \cite{short2012computation} one sees that
\begin{equation}\label{Q0-2}
Q(0,y_0) = e^{-\frac{y_0^2}{2}}.
\end{equation}
Thus, by using \eqref{Q0} and \eqref{Q0-2} we have
\begin{equation} \label{y0}
y_0 = \sqrt{-2\ln(1-\varepsilon)}.
\end{equation}
For the large values of $x$ from \eqref{I/I} one can write
\begin{equation}
\frac{I_1(xy)}{I_0(xy)} \cong 1-\frac{1}{2xy},
\end{equation}
which can be used in \eqref{dy/dx} to yield
\begin{equation} \label{y'=1-1/2xy}
\frac{\text{d} y}{\text{d} x} \cong 1-\frac{1}{2xy}.
\end{equation}
In order to solve the above differential equation first we solve the equation by neglecting $1/2xy$. To this end, we rewrite it as
\begin{equation}
\frac{\text{d} y}{\text{d} x} \cong 1.
\end{equation}
This equation has a simple solution as 
\begin{equation}\label{y=x+et1}
y \cong x+\eta_{1\varepsilon},
\end{equation}
where $\eta_{1\varepsilon}$ is a constant determined by $\varepsilon$. Now by using  $y = x+\eta_{1\varepsilon}$, the equation \eqref{y'=1-1/2xy} can be rewritten as
\begin{equation} \label{dy/dx=}
\frac{\text{d} y}{\text{d} x} \cong 1-\frac{1}{2x(x+\eta_{1\varepsilon})} = 1-\frac{1}{2\eta_{1\varepsilon}}\left[\frac{1}{x}-\frac{1}{x+\eta_{1\varepsilon}}\right].
\end{equation}
Therefore, taking the integral of the above equation obtains
\begin{subequations}
\begin{align} 
y &\cong x-\frac{1}{2\eta_{1\varepsilon}}[\ln(x)-\ln(x+\eta_{1\varepsilon})]+\eta_{2\varepsilon} \\ \label{y=x+et2}
&= x-\frac{1}{2\eta_{1\varepsilon}}\ln\left[\frac{x}{x+\eta_{1\varepsilon}}\right]+\eta_{2\varepsilon}.
\end{align}
\end{subequations}
It is to be noted that as $x \rightarrow \infty$, from \eqref{y=x+et2} one finds $y = x + \eta_{2\varepsilon}$ where in comparison with \eqref{y=x+et1} results in
\begin{equation}
\eta_{2\varepsilon} = \eta_{1\varepsilon} \triangleq \eta_\varepsilon.
\end{equation}
In conclusion \eqref{y=x+et2} can be rewritten as
\begin{equation} \label{y(x)}
y \cong x-\frac{1}{2\eta_\varepsilon}\ln\left[\frac{x}{x+\eta_\varepsilon}\right]+\eta_\varepsilon.
\end{equation}
To obtain $\eta_\varepsilon$ one can find that $x \rightarrow \infty$ leads to $y = x + \eta_\varepsilon$ which means that $y \rightarrow \infty$ and $y \gg y-x$. On the other hand, from \cite[2.3--39]{proakis} the conditions $y \rightarrow \infty$ and $y \gg y-x$ results in $Q(x,y) = Q(y-x)$. Thus, since $1-\varepsilon = Q(x,y)$ we have $1-\varepsilon = Q(y-x) = 1-Q(x-y)$ or $Q(x-y) = \varepsilon$. Therefore $Q(-\eta_\varepsilon) = \varepsilon$ or $\eta_\varepsilon = -Q^{-1}(\varepsilon)$.

Note that if $Q^{-1}(\varepsilon) = 0$ the relation in \eqref{y(x)} is ambiguous. To resolve this issue, we re-compute $y$ from \eqref{dy/dx=} by replacing $\eta_{1\varepsilon} = 0$. Thus, we have
\begin{equation}
\frac{\text{d} y}{\text{d} x} \cong 1-\frac{1}{2x^2},
\end{equation}
which results in
\begin{equation} \label{y3}
y \cong x+\frac{1}{2x}+\eta_\varepsilon,
\end{equation}
where $\eta_\varepsilon = -Q^{-1}(\varepsilon)$ as explained before.

In conclusion, from \eqref{y=y0exp}, \eqref{y0}, \eqref{y(x)} and \eqref{y3} for $Q^{-1}(\varepsilon) \neq 0$ we obtain
\begin{equation}
y = 
     \begin{cases}
       \sqrt{-2\ln(1-\varepsilon)}~e^{\frac{x^2}{4}} & ;~x \leq x_0 \\
       x-\frac{1}{2\eta_\varepsilon}\ln\left(\frac{x}{x+\eta_\varepsilon}\right)+\eta_\varepsilon & ;~x > x_0~\wedge~\eta_\varepsilon = -Q^{-1}(\varepsilon)
     \end{cases}
\end{equation}

and for $Q^{-1}(\varepsilon) = 0$ we have
\begin{equation}\label{yfinal2}
y = 
     \begin{cases}
       \sqrt{-2\ln(1-\varepsilon)}~e^{\frac{x^2}{4}} & ;~x \leq x_0 \\
       x+\frac{1}{2x} & ;~x > x_0
     \end{cases}
\end{equation}
where $x_0$ can be determined by the intersection of the sub-functions at $x > \max[0,Q^{-1}(\varepsilon)]$. To clarify this, we note that $y = \sqrt{-2\ln(1-\varepsilon)}~e^{{x^2}/{4}}$ is an strictly increasing function which goes away from $Q^{-1}(x,1-\varepsilon)$ to $+\infty$ at $x \geqslant \max[0,Q^{-1}(\varepsilon)]$, whereas $y = x-\frac{1}{2\eta_\varepsilon}\ln\left(\frac{x}{x+\eta_\varepsilon}\right)+\eta_\varepsilon$ sharply deceases from $+\infty$ to $Q^{-1}(x,1-\varepsilon)$ since $\ln\left[{x}/(x+\eta_\varepsilon)\right]$ is dominant term at the vicinity of  $\max[0,Q^{-1}(\varepsilon)]$. Considering this fact, at a unique $x_0$ these two sub-functions meet each other which is considered as the decision point to switch the approximate values for $y = Q^{-1}(x,1-\varepsilon)$ based on the proposed piecewise function. According to the above discussion, at $x = x_0$ the piecewise function returns the least accurate approximation. The same argument is used for finding the value of $x_0$ in \eqref{yfinal2}.

\begin{figure}[t!]
  \centering
  \includegraphics[width=0.6\columnwidth]{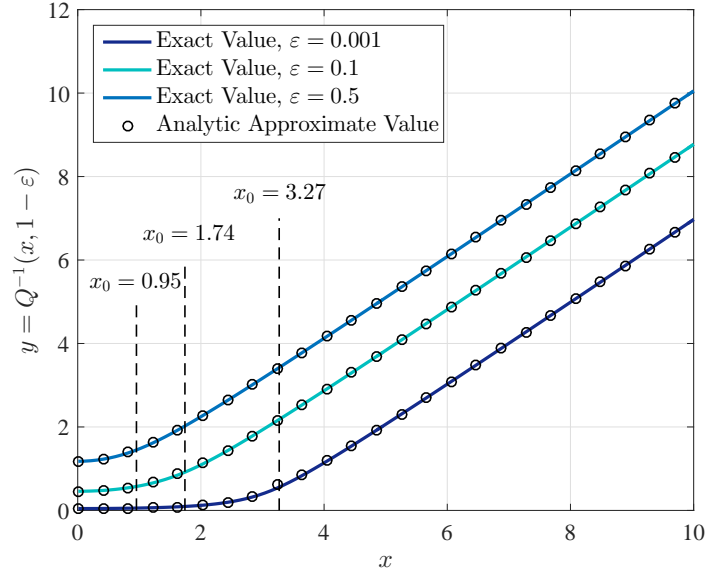}\\
  \caption{The proposed analytic solution for $y = Q^{-1}(x,1-\varepsilon)$ is a good approximation of the exact value.}\label{InvMQfunc_Exact_Approx}
\end{figure}

Figure \ref{InvMQfunc_Exact_Approx} compares the proposed analytic solution for $y = Q^{-1}(x,1-\varepsilon)$ with the exact values. As can be seen, the analytic solution is a good approximate of the exact values.


\section{Proof of Theorem \ref{OptEleAng theorem}} \label{OptEleAng theorem proof}

The derivative of the coverage radius $\rdc$ at its maximum should be zero. Thus 
\begin{equation}
\frac{\partial}{\partial \tC}\rdc = 0
\end{equation}
or equivalently
\begin{equation}\label{Drdp}
\frac{\partial}{\partial \tC}\ln(\rdc) = 0.
\end{equation}
From \eqref{rC} one obtains
\begin{equation}
\ln(\rdc) = \ln \left[\Lambda(\tC)\right] + \ln \left[\cos(\tC)\right],
\end{equation}
and hence by taking the derivative 
\begin{subequations}
\begin{align}
\frac{\partial}{\partial \tC}\ln(\rdc) &= \frac{\partial}{\partial \tC}\ln \left[\Lambda(\tC)\right] + \frac{\partial}{\partial \tC}\ln \left[\cos(\tC)\right] = \frac{\partial}{\partial \tC}\ln \left[\Lambda(\tC)\right] - \tan(\tC).
\end{align}
\end{subequations}
Replacing \eqref{rC} into \eqref{Drdp} yields
\begin{equation}\label{tan=}
\tan(\tC) = \frac{\partial}{\partial \tC}\ln \left[\Lambda(\tC)\right].
\end{equation}
Assuming $\xc \gg \sqrt{2}$ at $\tC = \tCoptdc$, we can simplify $\Lambda(\tC)$ in \eqref{Lambda} as
\begin{equation}\label{LambdaApp}
\Lambda(\tC) \cong \left(\frac{\gamma_\text{U}\,\yc^2}{\xi\,\xc^2}\right)^{\frac{1}{\a(\tC)}},
\end{equation}
and $\yc$ in \eqref{x,y,tC} as 
\begin{equation} \label{yc=xc-}
\yc \cong \xc-Q^{-1}(\varepsilon).
\end{equation}
Using \eqref{tan=} -- \eqref{yc=xc-} one can write
\begin{subequations}
\begin{align}
\tan(\tC) &= \frac{\partial}{\partial \tC} \left[\frac{\ln(\gamma_\text{U}/\xi)+2\ln(\yc)-2\ln(\xc)}{\a(\tC)}\right] \\ \label{eq2}
&= \frac{\partial}{\partial \tC} \left[\frac{\ln(\gamma_\text{U}/\xi)+2\ln[\xc-Q^{-1}(\varepsilon)]-2\ln(\xc)}{\a(\tC)}\right] \\
\label{eq3} &= \frac{2\xc'\left(\frac{1}{\xc-Q^{-1}(\varepsilon)}-\frac{1}{\xc}\right)\a(\tC)-\a'(\tC)\a(\tC)\ln[\Lambda(\tC)]}{\a(\tC)^2} \\
\label{eq4} &= \frac{2\xc'\left(\frac{Q^{-1}(\varepsilon)}{\xc[\xc-Q^{-1}(\varepsilon)]}\right)-\a'(\tC)\ln[\Lambda(\tC)]}{\a(\tC)},
\end{align} 
\end{subequations}
where in \eqref{eq2} and \eqref{eq3} the relations \eqref{yc=xc-} and \eqref{LambdaApp} are used respectively. The equation \eqref{eq4} can be rewritten as
\begin{equation}
\a(\tC)\tan(\tC)+\a'(\tC)\ln[\Lambda(\tC)] = 2\xc'\left(\frac{Q^{-1}(\varepsilon)}{\xc[\xc-Q^{-1}(\varepsilon)]}\right),
\end{equation}
which is the desired result.


\section{Proof of Theorem \ref{theorem - relaying outage}} \label{theorem - relaying outage proof}
Using the total probability theorem, \eqref{out-def-R} can be written as
\begin{equation}\label{Pout=sigma}
\Poutrc = \sum_{i=0}^{\infty}\mathds{P}\left(\Gamma _{\text{R}_J\text{D}}\leq\xi~|~|\A|=i\right)\mathds{P}(|\A|=i) 
\end{equation} 
where $|\A|$ indicates the cardinality of $\A$. Using Marking Theorem \cite{kingman1993poisson}, $\A$ follows PPP with the density obtained as
\begin{subequations}
\begin{align} \nonumber
\lambda_{\A}(\rR,\fiR,h) &= \lambda\,\mathds{P}(\text{R}\in\A) = \lambda\,\mathds{P}(\Gamma_{\text{U}{\text{R}}}>\xi)  \\ 
&\label{lam-A-1} = \lambda\,\mathds{P} \left(\frac{A \PU}{{N_0}\lUR^{\,\a(\tR)}}~\Omega_{\text{U}{\text{R}}}>\xi\right) \\ 
&\label{lam-A-3} = \lambda\,Q\left(\sqrt{2K(\tR)},\sqrt{{2[K(\tR)+1]\,\xi {\lUR^{\,\a(\tR)}}}/{\gamma_\text{U}}}\right)~;~~~\tR = \tan^{-1}(h/\rR),
\end{align}
\end{subequations}
where R is a relay node at an arbitrary location indicated by $(\rR,\fiR)$ in polar coordinates, \eqref{lam-A-1} is obtained using \eqref{GUX} by replacing X with R, and \eqref{lam-A-3} follows from the fact that $\Omega_{\text{U}{\text{R}}}$ has a non-central chi-square PDF expressed in \eqref{OUX} with unit mean. Therefore, $|\A|$ is a Poisson random variable with mean $\mu_{\A}$ computed as
\begin{align} \nonumber
\mu_{\A}(h) &= \int_{\C} \lambda_{\A}(\rR,\fiR,h)\,d\C = \lambda \int_0^{2\pi}\!\!\! \int_0^{\rrc} Q\left(\sqrt{2K(\tR)},\sqrt{{2[K(\tR)+1]\,\xi {\lUR^{\,\a(\tR)}}}/{\gamma_\text{U}}}\right)~\rR\,d\rR\,d\fiR \\
&\label{m-A-1} = 2\pi\lambda \int_0^{\rrc} \rR\,Q\left(\sqrt{2K(\tR)},\sqrt{{2[K(\tR)+1]\,\xi {\lUR^{\,\a(\tR)}}}/{\gamma_\text{U}}}\right)~d\rR,
\end{align}
where $d\C$ is the surface element and \eqref{m-A-1} follows from the fact that $K(\tR) = K\left(\tan^{-1}(h/\rR)\right)$ and $\lUR = \sqrt{h^2+r_\text{R}^2}$ adopt the same value in any $\fiR$. Therefore, the probability mass function of $|\A|$ can be expressed as
\begin{equation}\label{P(A=k)}
\mathds{P}(|\A|=i) = \frac{\mu_{\A}(h)^i}{i!}~e^{-\mu_{\A}(h)}.
\end{equation}
As a result of the PPP property, the locations of the relay nodes in $\A$, i.e. $\text{R}_1,\text{R}_2,...,\text{R}_i$ are identical independent (i.i.d) RVs indicated by R. Thus following the assumption that the fading powers between any pair of nodes are independent RVs, $\Gamma _{\text{R}_1\text{D}},\Gamma _{\text{R}_2\text{D}},...,\Gamma _{\text{R}_i\text{D}}$ become i.i.d RVs as well which are indicated by $\Gamma _{\text{R}\text{D}}$. Therefore, the conditional probability term in \eqref{Pout=sigma} can be calculated as
\begin{subequations}
\begin{align}
\mathds{P}\left(\Gamma _{\text{R}_J\text{D}}\leq\xi~|~|\A|=i\right) &= \mathds{P}\left(\max\{\Gamma _{\text{R}_1\text{D}},\Gamma _{\text{R}_2\text{D}},...,\Gamma _{\text{R}_i\text{D}}\}\leq\xi\right)  \\ \label{cond_Pr}
&= \prod_{j=1}^{i} \mathds{P}\left(\Gamma _{\text{R}_j\text{D}}\leq\xi\right) = \mathds{P}\left(\Gamma _{\text{R}\text{D}}\leq\xi\right)^i.
\end{align}
\end{subequations}
By representing any specific location of $\text{R}$ as $\text{R}: (\rR,\fiR)$ one can write 
\begin{subequations}\label{Gr1d<xi}
\begin{align}
\mathds{P}\left(\Gamma _{\text{R}\text{D}}\leq\xi\right) &= \int_{\C} \mathds{P}\left(\Gamma _{\text{R}\text{D}}\leq\xi~|~\text{R}: (r_\text{R},\varphi_\text{R})\right) \frac{\lambda_{\A}(r_\text{R},\varphi_\text{R},h)}{\mu_{\A}(h)} ~d\C \\ \label{Gr1d<xi-1}
&= \frac{1}{\mu_{\A}(h)}\int_{\C} \mathds{P}\left(\frac{A P_{\text{R}}}{N_0{\lRD}^{\!\!\a_0}}~\Omega_{{\text{R}}{\text{D}}}\leq\xi~|~\text{R}: (r_\text{R},\varphi_\text{R})\right) \lambda_{\A}(r_\text{R},\varphi_\text{R},h) ~d\C \\ \label{Gr1d<xi-2}
&= \frac{1}{\mu_{\A}(h)} \int_{\C} \left[1-Q\left(\sqrt{2\kappa_0},\sqrt{{2(\kappa_0+1)~\xi
{\lRD}^{\a_0}}/{\gamma_\text{R}}}\right)\right] \lambda_{\A}(r_\text{R},\varphi_\text{R},h) ~d\C \\ \nonumber
&= \frac{\lambda}{\mu_{\A}(h)} \int_0^{2\pi}\!\!\!\int_0^{\rrc} r_\text{R}\left[1-Q\left(\sqrt{2\kappa_0},\sqrt{{2(\kappa_0+1)~\xi
{\lRD}^{\a_0}}/{\gamma_\text{R}}}\right)\right] \\ \label{Gr1d<xi-4}
&\times Q\left(\sqrt{2K(\tR)},\sqrt{{2[K(\tR)+1]~\xi {{\lUR}^{\a(\tR)}}}/{\gamma_\text{U}}}\right) dr_\text{R}d\varphi_\text{R}.
\end{align}
\end{subequations}
In \eqref{Gr1d<xi-1} the expression in \eqref{G_G1G2} is used, \eqref{Gr1d<xi-2} follows from the fact that $\Omega_{{\text{R}}{\text{D}}}$ has a non-central chi-square PDF with unit mean where $\gamma_\text{R} = A P_\text{R}/N_0$, and in \eqref{Gr1d<xi-4} the relation in \eqref{lam-A-3} is replaced.

Now from \eqref{Pout=sigma}, \eqref{P(A=k)} and \eqref{cond_Pr} one can write
\begin{subequations}
\begin{align}
\Poutrc(\rD,h) &= \sum_{i=0}^{\infty} \mathds{P}\left(\Gamma _{\text{R}\text{D}}\leq\xi\right)^i \frac{\mu_{\A}(h)^i}{i!}~e^{-\mu_{\A}(h)} = e^{-\mu_{\A}(h)} \sum_{i=0}^{\infty} \frac{\left[\mu_{\A}(h) \mathds{P}\left(\Gamma _{\text{R}\text{D}}\leq\xi\right)\right]^i}{i!} \\ \label{Pout-final-2}
&= e^{-\mu_{\A}(h)} e^{\mu_{\A}(h) \mathds{P}\left(\Gamma _{\text{R}\text{D}}\leq\xi\right)} = e^{-\mu_{\A}(h)+\mu_{\A}(h) \mathds{P}\left(\Gamma _{\text{R}\text{D}}\leq\xi\right)}
\end{align} 
\end{subequations}
where \eqref{Pout-final-2} is obtained using the Taylor series expansion of exponential function. From \eqref{m-A-1}, \eqref{Gr1d<xi-4} and \eqref{Pout-final-2} we obtain the desired result.


\section{Path Loss Exponent $\a(\theta)$} \label{PathLossExp}

Here we discuss the proposed expression for $\a(\theta)$ in \eqref{alfa-Plos} by using a model reported in \cite{al2014optimal} in which the path loss can be written as
\begin{subequations}\label{PL1}
\begin{equation}
\text{PL}_1(\theta,\ell) = \text{PL}_\text{LoS}(\ell) \cdot \PLoS(\theta) + \text{PL}_\text{NLoS}(\ell) \cdot [1-\PLoS(\theta)],
\end{equation}
where
\begin{align}
\text{PL}_\text{LoS}(\ell) &= 20\log\left(\frac{4 \pi f}{c} \ell \right) + \sigma_\text{LoS}, \\ \label{PLNLoS}
\text{PL}_\text{NLoS}(\ell) &= 20\log\left(\frac{4 \pi f}{c} \ell \right) + \sigma_\text{NLoS},
\end{align}
\end{subequations}
$\PLoS(\theta)$ is given in \eqref{PrLoS}, $f$ is the system frequency, $c$ is the speed of light, $\ell$ is the distance between transmitter and receiver, and $\sigma_\text{LoS}$ and $\sigma_\text{NLoS}$ are excessive path loss corresponded to the LoS and NLoS signals respectively which are constants being independent of $\theta$. On the other hand, in our model presented in Section II the path loss in dB is
\begin{equation}\label{PL2}
\text{PL}_2(\theta,\ell) = 10\a(\theta)\log(\ell) + A_\text{dB},
\end{equation} 
where $A_\text{dB} = 10\log(A)$. Thus in order to fit the two models one can write
\begin{equation}\label{alfa-mod1}
\a(\theta) = \frac{\text{PL}_1(\theta,\ell) - A_\text{dB}}{10\log(\ell)}.
\end{equation} 
However the above equation results in a distance-dependent $\a(\theta)$. To resolve this issue we take the average of $\a(\theta)$ obtained from \eqref{alfa-mod1} over a reasonable range of communication, $\ell_1,\ell_2,...,\ell_\text{N}$, for a UAV. Therefore, one obtains
\begin{equation}
\a(\theta) = \frac{1}{N} \sum_{i = 1}^{N} \frac{\text{PL}_1(\theta,\ell_i) - A_\text{dB}}{10\log(\ell_i)},
\end{equation}
which can be rewritten as
\begin{equation}
\a(\theta) = a_1 \cdot \mathcal{P}_\text{LoS}(\theta) + a_2,
\end{equation} 
where $a_1$ and $a_2$ are determined by the type of environment (suburban, urban, dense urban, ...) and system frequency, and given by
\begin{align}
a_1 &= \sum_{i = 1}^{N} \frac{\text{PL}_\text{LoS}(\ell_i)-\text{PL}_\text{NLoS}(\ell_i)}{10N\log(\ell_i)}, \nonumber \\
a_2 &= \sum_{i = 1}^{N} \frac{\text{PL}_\text{NLoS}(\ell_i)-A_\text{dB}}{10N\log(\ell_i)}.
\end{align}

It is to be noted that $\text{PL}_1(\theta,\ell)$ limits the model for large $\theta$s where free space assumption is met. To clarify this fact, assume that $\theta_0$ is a very low angle where $\PLoS(\theta_0) \rightarrow 0$. Thus, using \eqref{PL1} one obtains
\begin{equation}
\text{PL}_1(\theta_0,\ell) \cong \text{PL}_\text{NLoS}(\ell) = 20\log\left(\frac{4 \pi f}{c} \ell \right) + \sigma_\text{NLoS},
\end{equation}
where $\sigma_\text{NLoS}$ is a constant parameter independent from distance $\ell$. Therefore, the above equation is not following the well-known path loss behavior of the G2G communication where the path loss exponent is expected to be larger than 3. However, our proposed model is capable of resolving this issue by setting an appropriate value for $\a_0$ in a way that satisfies the G2G communication propagation characteristics for low angles.


\bibliographystyle{IEEEtran}
\bibliography{Refs}

\end{document}